\newcommand{\N}{{\mathbb N}}
\numberwithin{equation}{section}
\theoremstyle{plain}
\newtheorem{theorem}{Theorem}
\newtheorem{corollary}[theorem]{Corollary}
\newtheorem{lemma}[theorem]{Lemma}
\newtheorem{proposition}[theorem]{Proposition}
\theoremstyle{definition}
\newtheorem{definition}[theorem]{Definition}
\newtheorem{example}[theorem]{Example}
\newtheorem{conjecture}[theorem]{Conjecture}
\theoremstyle{remark}
\renewcommand{\epsilon}{\varepsilon}
\renewcommand{\phi}{\varphi}
\DeclareMathOperator{\Suff}{Suff}
\DeclareMathOperator{\Pref}{Pref}
\DeclareMathOperator{\Fact}{Fact}
\DeclareMathOperator{\AS}{AS}
\newcommand{\ass}[2]{\AS_{#2}(#1)}
\newcommand{\as}[1]{\AS(#1)}
\def\cd3#1{\textbf{\textsf{#1}}}
\def\sa#1{\cd3{#1}}
\begin{document}
\sloppy

\setcounter{page}{1}

\title[Abelian-Square-Rich Words]{Abelian-Square-Rich Words}

\author[G. Fici]{Gabriele Fici}
\address[G. Fici]{Dipartimento di Matematica e Informatica\\
                Universit\`a di Palermo\\
                Palermo, Italy}
\email[Corresponding author]{gabriele.fici@unipa.it}
\thanks{Some of the results contained in this paper were presented 
(without the third author) at the 10th International Conference on Words, WORDS 2015 \cite{FiMi15}.}

\author[F. Mignosi]{Filippo Mignosi}
\address[F. Mignosi]{Dipartimento di Ingegneria e Scienze dell'Informazione e Matematica\\ Universit\`a dell'Aquila\\ 
   L'Aquila, Italy}
\email{filippo.mignosi@univaq.it}

\author[J. Shallit]{Jeffrey Shallit}
\address[J. Shallit]{School of Computer Science, University of Waterloo\\ 
   Waterloo, ON N2L 3G1, Canada}
\email{shallit@cs.uwaterloo.ca}
   
\begin{abstract}
An abelian square is the concatenation of two words that are anagrams of one another. A word of length $n$ can contain at most $\Theta(n^2)$ distinct factors, and there exist words of length $n$ containing $\Theta(n^2)$ distinct abelian-square factors, that is, distinct factors that are abelian squares. This motivates us to study infinite words such that the number of distinct abelian-square factors of length $n$ grows quadratically with $n$. More precisely, we say that an infinite word $w$ is {\it abelian-square-rich} if, for every $n$, every factor of $w$ of length $n$ contains, on average, a number of distinct abelian-square factors that is quadratic in $n$; and {\it uniformly abelian-square-rich} if every factor of $w$ contains a number of distinct abelian-square factors that is proportional to the square of its length. Of course, if a word is uniformly abelian-square-rich, then it is abelian-square-rich, but we show that the converse is not true in general.
We prove that the Thue-Morse word is uniformly abelian-square-rich and that the function counting the number of distinct abelian-square factors of length $2n$ of the Thue-Morse word is $2$-regular. As for Sturmian words, we prove that a Sturmian word $s_{\alpha}$ of angle $\alpha$ is uniformly abelian-square-rich if and only if the irrational $\alpha$ has bounded partial quotients, that is, if and only if $s_{\alpha}$ has bounded exponent. 
\end{abstract}

\maketitle

\section{Introduction}

A fundamental topic in combinatorics on words is the study of repetitions. A {\it repetition} in a word is a factor that is formed by the concatenation of two or more identical blocks. The simplest kind of repetition is a {\it square},
that is, the concatenation of two copies of the same block, such as 
the English word {\tt hotshots}.  A famous  conjecture of Fraenkel and Simpson \cite{FS98}  states that a word of length $n$ contains fewer than $n$  distinct square factors. Experiments strongly suggest that the conjecture is true, but a theoretical proof of the conjecture seems difficult. In \cite{FS98}, the authors proved a bound of $2n$. In \cite{I07}, Ilie improved this bound to $2n-\Theta(\log n)$, and recently Deza et al.~showed the current best bound of $\frac{11}{6}n$ \cite{DeFrTh15}, but the conjectured bound is still out of reach.

Other variations on counting squares include counting squares in
partial words (e.g., \cite{BSMS09}) and pseudo-repetitions
(e.g., \cite{GMMNT13}).

Among the different generalizations of the notion of repetition, a prominent one is that of an abelian repetition. An {\it abelian repetition} in a word is a factor that is formed by the concatenation of two or more blocks that have the same number of occurrences of each letter in the alphabet. Of course, the simplest kind of abelian repetition is an \emph{abelian square}, that is, the concatenation of a word with an anagram of itself, such as the English word
{\tt intestines}.
Abelian squares were considered in 1961 by Erd\H{o}s \cite{Erdos61},
who conjectured that there exist infinite words avoiding abelian squares.
This conjecture was later confirmed,
and the smallest possible size of an alphabet for 
which it holds is known to be $4$ \cite{Ker92}.

We focus on the maximum number of distinct abelian squares that a word can contain.
In contrast to the case of ordinary squares, a word of length $n$ can contain $\Theta(n^2)$ distinct abelian-square factors (see \cite{Ry14}). Since the total number of factors in a word of length $n$ is quadratic in $n$, this means that there exist words in which a constant fraction of all factors are abelian squares. So we turn our attention to infinite words, and we ask whether there exist infinite words such that for every $n$ the factors of length $n$ contain, on average,
a number of distinct abelian-square factors that is quadratic in $n$.   We call such an infinite word \emph{abelian-square-rich}. Since a random binary word of length $n$ contains $\Theta(n\sqrt{n})$  distinct abelian-square factors \cite{Ch14}, the existence of abelian-square-rich words is not immediate. We also introduce \emph{uniformly abelian-square-rich} words; these are infinite words such that for every $n$, every factor of length $n$ contains a quadratic number of distinct abelian squares. Of course, if a word is uniformly abelian-square-rich, then it is abelian-square-rich, but the converse is not true in general --- we provide in this paper an example of a  word that is abelian-square-rich but not uniformly abelian-square-rich. However, we show that for linearly recurrent words the two definitions are equivalent. Moreover, we prove that if an infinite word $w$ is uniformly abelian-square-rich, then $w$ has bounded exponent (that is, there exists an integer $k\geq 2$ such that $w$ does not contain any repetition of order $k$ as a factor).

We then prove that the famous Thue-Morse word is uniformly abelian-square-rich. Furthermore, we look at the function that counts the number of distinct abelian squares of length $2n$ in the Thue-Morse word and prove that this function is $2$-regular.

Then we look at the class of Sturmian words; these are aperiodic infinite words with the lowest possible factor complexity. In this case, we prove that a Sturmian word has bounded exponent if and only if it is uniformly abelian-square-rich, and leave open the question of determining whether a Sturmian word is not abelian-square-rich in the case when it does not have bounded exponent.

\section{Notation and Background}

Let $\Sigma=\{a_1,a_2,\ldots,a_{\sigma}\}$ be an ordered $\sigma$-letter alphabet. Let $\Sigma^{*}$ stand for the free monoid generated by $\Sigma$, whose elements are called \emph{words} over $\Sigma$. The \emph{length} of a word $w$ is denoted by $|w|$. The \emph{empty word}, denoted by $\epsilon$, is the unique word of length zero and is the neutral element of $\Sigma^{*}$. We also define $\Sigma^{+}=\Sigma^{*}\setminus \{\epsilon\}$.
 
A \emph{prefix} (respectively, a \emph{suffix}) of a word $w$ is a word $u$ such that $w=uz$ (respectively, $w=zu$) for some word $z$. A \emph{factor} of $w$ is a prefix of a suffix (or, equivalently, a suffix of a prefix) of $w$.  
The set of prefixes, suffixes and factors of the word $w$ are denoted,
respectively, by $\Pref(w)$, $\Suff(w)$ and $\Fact(w)$.
From the definitions, we have that $\epsilon$ is a prefix, a suffix and a factor of every word. 

A word $w$ is a \emph{$k$-power} (also called a \emph{repetition of order $k$}), for an integer $k\geq 2$, if there exists a nonempty word $u$ such that $w=u^k$. A $2$-power is called a \emph{square}. The \emph{period} of a word $w=w_1w_2\cdots w_{|w|}$ is the minimal integer $p$ such that $w_{i+p}=w_i$ for every $1\leq i\leq |w|-p$. The \emph{exponent} $e(w)$ of a word $w$ is the ratio between its length $|w|$ and its period $p$. For example, the period of $w=abaab$ is $p=3$, hence $e(w)=5/3$. Of course, if a word $w$ avoids $k$-powers (that is, no factor of $w$ is a $k$-power), then the supremum of the exponents of factors of $w$ is smaller than $k$.

For a word $w$ and a letter $a_i\in \Sigma$, we let $|w|_{a_i}$ denote the number of occurrences of $a_i$ in $w$. The \emph{Parikh vector} (sometimes called the \emph{composition vector}) of a word $w$ over $\Sigma=\{a_1,a_2,\ldots,a_{\sigma}\}$ is the vector $P(w)=(|w|_{a_1},|w|_{a_2},\ldots,|w|_{a_{\sigma}})$. 
An \emph{abelian $k$-power} is a nonempty word of the form $v_1v_2\cdots v_k$ where all the $v_i$  have the same Parikh vector (and therefore in particular the same length).
An abelian $2$-power is called an \emph{abelian square}; an example in English is the word {\tt reappear}.

An \emph{infinite word} $w$ over $\Sigma$ is an infinite sequence of letters from $\Sigma$, that is, a function $w:\N \mapsto \Sigma$. 
An infinite word is \emph{recurrent} if each of its factors occurs infinitely often.
Given an infinite word $w$, the \emph{recurrence index} $R_w(n)$ of $w$ is defined to be the least integer $m$ such that every factor of $w$ of length $m$ contains all factors of $w$ of length $n$, or $+\infty$ if such an integer does not exist. If the recurrence index is finite for every $n$, the infinite word $w$ is called \emph{uniformly recurrent} and the function $R_w(n)$ the \emph{recurrence function} of $w$. A uniformly recurrent word is of course recurrent, but the converse is not always true. For example, the \emph{Champernowne word} $w=011011100101\cdots$, obtained by concatenating the base-2 representations of the natural numbers, is recurrent but not uniformly recurrent (to see this, it is sufficient to observe that it contains arbitrarily large consecutive blocks of the same letter).
A uniformly recurrent word  $w$ is called  \emph{linearly recurrent} if the ratio $R_w(n)/n$ is bounded by a constant. Given a linearly recurrent word $w$, the real number $r_w=\limsup_{n\to\infty}R_w(n)/n$ is called the \emph{recurrence quotient} of $w$. 
The \emph{factor complexity function} (sometimes called {\it subword 
complexity}) of an infinite word $w$ is the integer function $p_w(n)$ defined by $p_w(n)=|\Fact(w)\cap \Sigma^n|$. An infinite word $w$ has \emph{linear complexity} if $p_w(n)=O(n)$. In particular, if a word is linearly recurrent, then it has linear complexity (see, for example, \cite{Dur99}).

A \emph{substitution} over the alphabet $\Sigma$ is a map $\tau:\Sigma \mapsto \Sigma^+$. A substitution $\tau$ over $\Sigma$ can be naturally extended to a (non-erasing) morphism from $\Sigma^*$ to $\Sigma^*$.
A substitution can be iterated: for every substitution $\tau$ and every $n>0$, using the extension to a morphism, one can define the substitution $\tau^n$. A substitution $\tau$ is \emph{$r$-uniform} if there exists an integer $r\geq 1$ such that for all $a\in \Sigma$, $|\tau(a)|=r$. A substitution is called \emph{uniform} if it is $r$-uniform for some $r\geq 1$. A substitution $\tau$ is \emph{primitive} if there exists an integer $n\geq 1$ such that for every $a\in \Sigma$, the word $\tau^n(a)$ contains every letter of $\Sigma$ at least once. In this paper, we will only consider primitive substitutions such that $\tau(a_1)=a_1v$ for a letter $a_1$ and some nonempty word $v$. These substitutions always have a fixed point, which is the infinite word $w=\lim_{n\to\infty}\tau^n(a_1)$. Moreover, this fixed point is linearly recurrent (see, for example, \cite{DZ00}) and therefore has linear complexity.

For an integer $k\geq 2$, we say that an infinite word $w$ is \emph{$k$-power-free} if no factor of $w$ is a $k$-power. If an infinite word $w$ is $k$-power-free for some $k$, we say that $w$ has \emph{bounded exponent}. 
For example, if a word is linearly recurrent, then it has bounded exponent (see, for example, \cite{Dur99}).

\section{Abelian-Square-Rich Words}

Kociumaka et al.~\cite{Ry14} showed that a word of length $n$ can contain a number of distinct abelian-square factors that is quadratic in $n$. For the sake of completeness, we give a proof of this fact here.

\begin{proposition}\label{prop:as}
 A word of length $n$ can contain $\Theta(n^{2})$ distinct abelian-square factors.
\end{proposition}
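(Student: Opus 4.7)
The plan is to prove the two bounds separately. The upper bound is immediate: a word of length $n$ has at most $\binom{n+1}{2}+1$ distinct factors in total (counting the empty factor), and each abelian-square factor is in particular a factor, so there can be at most $O(n^2)$ distinct abelian-square factors.

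For the matching lower bound I would exhibit, for each $n$, a binary word of length $\Theta(n)$ containing $\Theta(n^2)$ distinct abelian squares. My candidate is the word $w_n := a^n b a^n b a^n$, of length $3n+2$. I would examine the family of factors of the shape $a^i b a^n b a^k$ with $0 \leq i, k \leq n$; each such expression is a factor of $w_n$, and distinct pairs $(i,k)$ yield distinct factors. The claim is that exactly those pairs $(i,k)$ for which $n+i+k$ is even give rise to an abelian-square factor. Since there are roughly $(n+1)^2/2 = \Theta(n^2)$ such pairs, this is enough to conclude, and setting $N = 3n+2$ converts $\Theta(n^2)$ into $\Theta(N^2)$ abelian squares in a word of length $N$.

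To verify the claim I would reason as follows. The factor $a^i b a^n b a^k$ has length $n+i+k+2$ and contains exactly two $b$'s. If it splits as $uv$ with $|u|=|v|$ and equal Parikh vectors, then $u$ and $v$ must each contain exactly one $b$, because the length bounds rule out $u$ being contained entirely in $a^i$ or containing both $b$'s. Hence the split point falls inside the central $a^n$ block, say after its first $j'$ letters, giving $u = a^i b a^{j'}$ and $v = a^{n-j'} b a^k$. Equating the number of $a$'s on both sides yields $j' = (n+k-i)/2$, so the factor is an abelian square precisely when $n+k-i$ is a non-negative even integer with $j' \leq n$. The inequalities $|k-i| \leq n$ are automatic from $0 \leq i, k \leq n$, reducing the condition to the parity requirement $n+i+k \equiv 0 \pmod{2}$.

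There is no serious obstacle here; the argument is essentially a construction together with a short case analysis. The one point that warrants care is confirming that the split position $j'$ really lies in $[0,n]$, that is, inside the middle $a^n$ block rather than in one of the flanking blocks — but this is forced by the bounds on $i$ and $k$, so the verification is immediate.
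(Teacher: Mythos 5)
Your proposal is correct and uses exactly the paper's construction: the word $a^n b a^n b a^n$ and the family of factors $a^i b a^n b a^j$ with $i+j+n$ even. You simply supply more detail (the trivial $O(n^2)$ upper bound and the explicit verification that the split point lands in the middle block) than the paper's terse proof, but the argument is the same.
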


\begin{proof}
 Consider the word $w_n=a^{n}ba^{n}ba^{n}$, of length $3n+2$. For every $0\leq i,j\leq n$ such that $i+j+n$ is even, the factor $a^{i}ba^{n}ba^{j}$ of $w$ is an abelian square. Since the number of possible choices for the pair $(i,j)$ is quadratic in $n$, we are done. 
\end{proof}

Motivated by the previous result, we might ask whether there exist infinite words such that all their factors contain a number of distinct abelian squares that is quadratic in their length. But first we relax this condition and consider words in which, for every $n$, each factor of length $n$ contains, on average, a number of distinct abelian-square factors that is quadratic in $n$.

Let us define some notation. Given a finite or infinite word $w$, we
let $\ass{w} {n}$ denote the number of distinct abelian-square factors of
$w$ of length $n$. Of course,  $\ass{w} {n} =0$ if $n$ is odd, so this
quantity is significant only for even values of $n$. Furthermore, for a
finite word $w$ of length $n$, we let $\as{w} =\sum_{m\leq n} \ass{w} {m}$ denote the
total number of distinct abelian-square factors, of all lengths, in $w$.

\begin{definition}
An infinite word $w$ is \emph{abelian-square-rich} if there exists a positive constant $C$ such that for every $n$ one has 
\[\frac{1}{p_w(n)}\sum_{v\in \Fact(w)\cap\Sigma^n} \as{v} \geq C n^2.
\]
\end{definition}

Notice that Christodoulakis et al. \cite{Ch14} proved that a binary word of length $n$ contains  $\Theta(n\sqrt{n})$  distinct abelian-square factors on average; 
hence a random infinite binary word is almost surely not  abelian-square-rich.

In an abelian-square-rich word the number of distinct abelian squares contained in any factor is, on average, quadratic in the length of the factor. A stronger condition is that \emph{every} factor contains a quadratic number of distinct abelian squares. We thus 
introduce the concept of uniformly abelian-square-rich words.

\begin{definition}
An infinite word $w$ is {\it uniformly abelian-square-rich\/} if 
there exists a positive constant $C$ such that $\as{v} \geq C |v|^2$
for all $v \in \Fact(w)$.
\end{definition}

Clearly, if a word is uniformly abelian-square-rich, then it is also abelian-square-rich, but the converse is not always true (we will provide an example of a  word that is abelian-square-rich but not uniformly abelian-square-rich at the end of this section). However, in the case of linearly recurrent words, the two definitions are equivalent, as shown in the next lemma.

\begin{lemma}\label{lem:un}
Let $w$ be an infinite word. If $w$ is abelian-square-rich and linearly recurrent, then it is uniformly abelian-square-rich.
\end{lemma}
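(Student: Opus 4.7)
The plan is to exploit linear recurrence to upgrade the average quadratic lower bound guaranteed by abelian-square-richness into a uniform one. By hypothesis there exists a constant $K$ such that $R_w(n)\le Kn$ for all $n$, so every factor of $w$ of length at least $Kn$ contains every factor of $w$ of length at most $n$. Given an arbitrary $v\in\Fact(w)$ of length $N$, I would set $n=\lfloor N/K\rfloor$, so that $N\ge R_w(n)$; linear recurrence then guarantees
\[
\as{v}\ \ge\ \sum_{m=1}^{n}\ass{w}{m},
\]
since every abelian-square factor of $w$ of length at most $n$ must occur in $v$.

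To bound the right-hand side from below I would use the trivial upper bound $\as{u}\le\sum_{m=1}^{n}\ass{w}{m}$, valid for every $u\in\Fact(w)\cap\Sigma^n$ (each abelian-square factor of $u$ is an abelian-square factor of $w$ of length $\le n$). Averaging this inequality over all such $u$ and applying the abelian-square-rich hypothesis yields
\[
\sum_{m=1}^{n}\ass{w}{m}\ \ge\ \frac{1}{p_w(n)}\sum_{u\in\Fact(w)\cap\Sigma^n}\as{u}\ \ge\ Cn^2.
\]
Combining the two displays gives $\as{v}\ge Cn^2\ge C(N/K-1)^2$, which is $\Omega(N^2)$. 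Choosing a new constant $C'$ slightly smaller than $C/K^2$ (and absorbing finitely many short factors into the constant) then yields $\as{v}\ge C'|v|^2$ for all $v\in\Fact(w)$, as required.

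I do not anticipate any serious obstacle; the only step that requires care is keeping straight that the quantity $\sum_{m=1}^{n}\ass{w}{m}$ plays opposite roles in the two displays --- as a lower bound for $\as{v}$ in the first and as an upper bound for each $\as{u}$ in the second --- and this is precisely the mechanism that converts an average bound into a pointwise one. Notice also that the factor complexity $p_w(n)$ appears on both sides of the key inequality and simply cancels, so no use of the estimate $p_w(n)=O(n)$ (which holds for linearly recurrent words) is actually needed.
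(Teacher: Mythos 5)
Your proof is correct and follows essentially the same route as the paper's: linear recurrence forces every sufficiently long factor to contain a witness whose abelian-square count is bounded below by the average count over factors of length $n$, which is at least $Cn^2$ by hypothesis. The only cosmetic difference is that the paper's witness is a single length-$n$ factor maximizing the number of distinct abelian squares (max $\geq$ average), whereas you use the whole collection of abelian-square factors of length at most $n$ and reach the same lower bound by noting that $\sum_{m\leq n}\ass{w}{m}$ dominates each individual $\as{u}$ and hence their average.
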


\begin{proof}
If $w$ is linearly recurrent, then there exists a positive integer $K$ such that, for every $n$, every factor of $w$ of length $Kn$ contains all the factors of $w$ of length $n$. Let $v$ be a factor of $w$ of length $n$ containing the largest number of distinct abelian squares among the factors of $w$ of length $n$. Hence the number of distinct abelian squares in $v$ is at least the average number of distinct abelian squares in a factor of $w$ of length $n$. Since $w$ is abelian-square-rich, the number of distinct abelian squares in $v$ is greater than or equal to $C'n^2$, for a positive constant $C'$. Since $v$ is contained in every factor of $w$ of length $Kn$, the number of distinct abelian squares in every factor of $w$ of length $Kn$ is greater than or equal to $C'n^2$, whence $w$ is uniformly abelian-square-rich.
\end{proof}

The following lemma will be useful in the next sections.

\begin{lemma}\label{lem:lin}
 Let $w$ be a linearly recurrent infinite word. If there exists a positive constant $C$ such that for every 
 $n$ one has $\sum_{m\leq n}\ass{w} {m}\geq Cn^2$, then $w$ is (uniformly) abelian-square-rich. 
\end{lemma}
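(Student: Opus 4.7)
The plan is to leverage linear recurrence in order to transfer the global quadratic lower bound on abelian-square counts from $w$ to each individual (sufficiently long) factor of $w$. The guiding idea is exactly the one already exploited in Lemma \ref{lem:un}: in a linearly recurrent word, any long factor of $w$ must already witness every short factor of $w$, and in particular every short abelian-square factor of $w$.

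Concretely, I would first fix a positive integer $K$ witnessing the linear recurrence of $w$, so that every factor of $w$ of length $Km$ contains every factor of $w$ of length $m$. Given an arbitrary factor $v$ of $w$ of length $N$, I set $m=\lfloor N/K\rfloor$ and observe that $v$ contains every factor of $w$ of length exactly $m$, and hence every factor of $w$ of length at most $m$ (since each such shorter factor sits inside some length-$m$ factor of $w$, which in turn sits inside $v$). In particular, every abelian-square factor of $w$ of length $\leq m$ occurs in $v$, so
$$\as{v}\;\geq\;\sum_{j\leq m}\ass{w}{j}\;\geq\;Cm^{2}$$
by the standing hypothesis of the lemma.

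Bounding $m\geq N/(2K)$ for $N\geq 2K$ then yields $\as{v}\geq C'|v|^{2}$ for the constant $C'=C/(4K^{2})$, which is exactly the uniform abelian-square-richness condition; the finitely many shorter factors are handled by shrinking $C'$ if necessary. The non-uniform abelian-square-rich condition follows immediately, since averaging the pointwise inequality $\as{v}\geq C'n^{2}$ over all factors $v$ of length $n$ preserves the quadratic lower bound.

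I do not foresee a genuine obstacle here: the argument is essentially a direct application of the linear-recurrence hypothesis, together with the fact that $\as{\cdot}$ is monotone with respect to the containment of factors. The only mild care needed is in tracking the constants and absorbing the finitely many boundary cases where $\lfloor N/K\rfloor$ is too crude a lower bound on $N/K$.
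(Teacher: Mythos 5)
Your proposal is correct and follows essentially the same route as the paper: use the linear-recurrence constant $K$ to argue that any factor of length about $Kn$ contains all factors of $w$ of length at most $n$, hence all $\sum_{m\leq n}\ass{w}{m}\geq Cn^2$ distinct abelian squares of those lengths, and then rescale the constant. Your version is only slightly more explicit about the floor function and the constants; the one cosmetic caveat (shared with the paper's own informal treatment) is that very short factors containing no abelian square at all cannot literally be fixed by shrinking $C'$, so the uniform bound is to be read asymptotically.
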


\begin{proof}
Since $w$ is linearly recurrent,  there exists a constant $K$ such that, for every $n$, every factor of $w$ of length $Kn$ contains all the factors of $w$ of length $n$. Take any factor $u$ of $w$ of length $Kn$. Since $u$ contains all the factors of $w$ of length $n$, it contains a number of distinct abelian-square factors that is larger than a constant times $n^2$. Therefore, every factor of length $n$ (and hence in particular, on average) contains a number of distinct abelian-square factors that is proportional to the square of its length.
\end{proof}

In the next proposition we show that a necessary condition for a word to be uniformly abelian-square-rich is that it does not contain factors with arbitrarily large exponent.

\begin{proposition}\label{prop:new}
Let $w$ be an infinite word. If $w$ is uniformly abelian-square-rich, then $w$ has bounded exponent.
\end{proposition}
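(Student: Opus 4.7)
The plan is to argue the contrapositive: assuming $w$ does not have bounded exponent, I would show that $w$ fails to be uniformly abelian-square-rich. Unboundedness of exponent means that for every integer $k\geq 2$ one can pick a factor $v_k\in\Fact(w)$ whose (smallest) period $p_k$ satisfies $|v_k|/p_k\geq k$; in particular $|v_k|\geq k$, so the lengths of the witnesses tend to infinity.

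The decisive ingredient is an elementary counting bound: if a word $v$ has period $p$, then the total number of distinct factors of $v$ is at most $p\,|v|$. Indeed, two occurrences of a length-$\ell$ factor whose starting positions differ by $p$ yield the same factor (wherever both are defined), so the number of distinct factors of $v$ of length $\ell$ is at most $\min(p,\,|v|-\ell+1)\leq p$; summing over $\ell=1,\ldots,|v|$ gives the asserted bound. Since every abelian-square factor of $v$ is in particular a factor of $v$, this yields $\as{v}\leq p\,|v|$.

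Applied to the witnesses $v_k$, this gives $\as{v_k}\leq p_k|v_k|\leq |v_k|^2/k$. If $w$ were uniformly abelian-square-rich with constant $C$, then for $k$ large enough (so that $|v_k|$ is large enough to make the bound $C|v_k|^2$ meaningful) we would also have $\as{v_k}\geq C|v_k|^2$, whence $C\leq 1/k$; letting $k\to\infty$ is a contradiction. I do not anticipate a serious obstacle: the whole argument turns on the simple periodicity-based count of distinct factors, and the requirement that $|v_k|$ be sufficiently large is automatic because $|v_k|\geq k$.
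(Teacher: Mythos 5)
Your proof is correct and rests on the same idea as the paper's: periodicity forces every (abelian-square) factor of a highly repetitive word to have an occurrence starting within the first period, so such a word of length $n$ and exponent at least $k$ contains at most $n^2/k$ distinct abelian squares, which is incompatible with the uniform lower bound $Cn^2$ once $1/k<C$. The only cosmetic difference is that you phrase the count via the minimal period of a general factor of exponent $\geq k$, while the paper works directly with a $k$-power $u^k$ and splits the abelian squares according to whether their first occurrence ends before or after position $|u|$; both yield the same bound.
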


\begin{proof} 
Let $u^k$ be a nonempty factor of $w$ of length $n=km$, $m=|u|$. Every abelian square in $u^k$ has an occurrence starting at a position smaller than $m$. We separate the abelian squares of $u^k$ in two disjoint sets: those whose first occurrence ends at, or before, position $m$ (i.e., those occurring in $u$), and those whose first occurrence ends after position $m$. Since there are no more than $m^2$ distinct abelian squares of the first kind and no more than $m\cdot (k-1)m=(k-1)m^2$ distinct abelian squares of the second kind, we have that $u^k$ contains no more than $km^2=n^2/k$ distinct abelian-square factors. 
 
If $w$ does not have bounded exponent then, for every $k\geq 2$, $w$ contains a nonempty factor $v$ of the form $v=u^k$, for some $u$. Hence, for every positive constant $C$, taking $k$ such that $1/k<C$, the word $w$ contains a factor $v$ such that $\as{v} < C |v|^2$.
\end{proof}


To conclude this section, we exhibit an example of a word that is abelian-square-rich but not uniformly abelian-square-rich.

Consider the sequence of words $(w_k)_{k\geq 1}$  defined by: $w_1=aabaabaab$,  and for every $k> 1$ 
\begin{align}\label{eq:len}
w_{k} &= w_{k-1}a^{2^k}ba^{2^k}ba^{2^k}b.
\end{align}

\begin{proposition}
 The infinite word $w=\lim_{k\to \infty}w_k$ is  abelian-square-rich but not uniformly abelian-square-rich.
\end{proposition}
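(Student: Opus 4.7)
My plan is to establish the two halves of the statement separately. The negative half, that $w$ is not uniformly abelian-square-rich, is immediate: for every $k\ge 1$ the factor $a^{2^k}$ occurs in $w$ (inside the block $a^{2^k}ba^{2^k}ba^{2^k}b$ appended at stage $k$), so $w$ has unbounded exponent, and Proposition~\ref{prop:new} applies.

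For the positive half I will combine an upper bound $p_w(n)=O(n)$ with a lower bound $\sum_{v}\as{v}=\Omega(n^3)$ over length-$n$ factors, whose ratio $\Omega(n^2)$ yields the desired average. The factor-complexity bound rests on the following observation: every length-$n$ factor $v$ of $w$ can be written in the form $a^{\alpha}ba^{\beta_1}b\cdots ba^{\beta_m}ba^{\gamma}$, where $(\beta_1,\ldots,\beta_m)$ is a contiguous sub-sequence of the complete-$a$-run sequence $2,2,2,4,4,4,8,8,8,\ldots$ of $w$. Since each $\beta_i\le n\le 2^K$ for $K=\lceil\log_2 n\rceil$, this sub-sequence lies among the first $3K$ entries, and the truncated boundary runs $a^{\alpha},a^{\gamma}$ (of length at most $n\le 2^K$) can be housed inside the $a^{2^K}$-runs already present in $w_{K+1}$. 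Thus every length-$n$ factor of $w$ occurs inside $w_{K+1}$, whose total length is $O(2^K)=O(n)$, giving $p_w(n)=O(n)$.

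To produce many abelian squares I will choose, for $n$ large, the unique integer $k$ with $(n-2)/3<2^k\le 2(n-2)/3$. In the stage-$k$ block $(a^{2^k}b)^3$, the second and third $b$'s are separated by $a^{2^k}$, preceded by the $a^{2^k}$-run between the first two $b$'s, and followed by the $a^{2^{k+1}}$-run opening the next stage. Hence for each integer $i\in[n/8,\,n-2-2^k-n/8]$ --- a range of length $\Omega(n)$ since $2^k\le 2(n-2)/3$ --- the word $v_i:=a^i b a^{2^k} b a^{n-2-2^k-i}$ is a length-$n$ factor of $w$, and distinct $i$'s yield distinct strings. Setting $m=\min(i,\,n-2-2^k-i)\ge n/8$, every pair $(p,s)\in[0,m]^2$ with $p\equiv s\pmod 2$ produces an abelian-square sub-factor $a^p b a^{2^k} b a^s$ of $v_i$: taking $q=(2^k+s-p)/2$ and $r=2^k-q$, both in $[0,2^k]$ because $|s-p|\le m\le 2^k$, the halves $a^p b a^q$ and $a^r b a^s$ have equal length and matching Parikh vectors (the same mechanism as in Proposition~\ref{prop:as}). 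Distinct $(p,s)$ give distinct strings, so $\as{v_i}\ge\Theta(m^2)=\Omega(n^2)$.

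Combining the two estimates yields $\sum_{v\in\Fact(w)\cap\Sigma^n}\as{v}\ge\Omega(n)\cdot\Omega(n^2)=\Omega(n^3)$, and dividing by $p_w(n)=O(n)$ gives an average of order $n^2$; a bounded number of small-$n$ exceptions are absorbed by shrinking the constant. The most delicate step is the factor-complexity bound: I will need to show carefully that an arbitrary length-$n$ factor of $w$, which could a priori sit deep inside $w$ where the surrounding $a$-runs are much longer than $n$, nevertheless embeds into $w_{K+1}$, thanks to the truncation of its own first and last $a$-runs to length at most $n\le 2^K$.
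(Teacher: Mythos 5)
Your overall strategy is sound and closely parallels the paper's: the negative half via Proposition~\ref{prop:new} is exactly the paper's argument, and for the positive half you combine $p_w(n)=O(n)$ with linearly many length-$n$ factors each containing $\Omega(n^2)$ distinct abelian squares. Your route to linear complexity (embedding every length-$n$ factor into the prefix $w_{K+1}$, $K=\lceil\log_2 n\rceil$) differs from the paper's, which instead bounds the number of right special factors of each length by $4$; but your route is correct, since any complete $a$-run inside a length-$n$ factor has length less than $n\le 2^K$ and is therefore one of the first $3K$ runs of $w$, which pins every occurrence of a factor containing at least one complete run inside $w_{K+1}$, while factors with at most one $b$ are handled by the runs of length $2^{K+1}$. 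Your abelian-square count inside $v_i$ (pairs $(p,s)$ with $p\equiv s \pmod 2$ and split point $q=(2^k+s-p)/2$) is also correct and is the same mechanism as in Proposition~\ref{prop:as}.

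The gap is in the claim that $v_i=a^iba^{2^k}ba^{n-2-2^k-i}$ is a factor of $w$ for \emph{every} $i\in[n/8,\,n-2-2^k-n/8]$. The occurrence you describe requires $a^i$ to be a suffix of the $a^{2^k}$-run preceding the middle one, hence $i\le 2^k$; and in fact every complete $a^{2^k}$-run of $w$ is preceded by a run of length at most $2^k$, so $a^iba^{2^k}b$ is a factor of $w$ \emph{only if} $i\le 2^k$. Your range does not enforce this: with $2^k$ chosen in $\bigl((n-2)/3,\,2(n-2)/3\bigr]$, the upper endpoint $n-2-2^k-n/8$ can exceed $2^k$ (for instance $n=193$ forces $2^k=64$ while allowing $i$ up to $102$), and for such $i$ the word $v_i$ is not a factor of $w$, so it contributes nothing to the sum over $\Fact(w)\cap\Sigma^n$. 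The repair is immediate: restrict to $i\in\bigl[n/8,\ \min(2^k,\,n-2-2^k-n/8)\bigr]$. Since $2^k>(n-2)/3$ and $n-2-2^k-n/8\ge (n-2)/3-n/8$, this interval still has length at least $n/12-O(1)=\Omega(n)$, and the rest of your argument (including $j=n-2-2^k-i<2(n-2)/3<2^{k+1}$, so that $a^j$ fits in the following $a^{2^{k+1}}$-run) goes through unchanged.
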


\begin{proof}
Let us first prove that $w$ is abelian-square-rich. We first observe that $w$ has linear complexity. Indeed, this is an immediate consequence of the fact that for every $n$ there is a constant number of distinct factors of length $n$ that can be extended to the right both by the letter $a$ or by the letter $b$ to factors of $w$ of length $n+1$ (these factors are usually called \emph{right special factors}). Actually, for every $n$ the number of right special factors of $w$ of length $n$ is bounded by $4$, since these can only be of the following kinds:
\begin{enumerate}
\item $a^n$;
\item $a^iba^j$, for some $i\geq 0$ and $j>0$ such that $j=2^r$ for some $r$ and $i+j+1=n$;
\item $a^iba^jba^j$, for some $i\geq 0$ and $j>0$ such that $j=2^r$ for some $r$ and $i+2j+2=n$;
\item $a^iba^jba^jba^j$, for some $i\geq 0$ and $j>0$ such that $j=2^r$ for some $r$ and $i+3j+3=n$.
\end{enumerate}
It is readily verified that if a factor of $w$ contains more than $3$ $b$'s or does not end in $a^{2^r}$ for some $r>0$, then it cannot be a right special factor of $w$.

So, it is sufficient to prove that for every $n$ there is a linear (in $n$) number of factors of $w$ of length $n$ that contain a number of distinct abelian-square factors proportional to $n^2$. This follows from the fact that for every $n$ there is a linear number of distinct factors that contain $a^{2^k}ba^{2^k}ba^{2^k}$ as factor, for some value of $2^k$ proportional to $n$ (for example, take the largest $k$ such that $2^k\leq n/100$, and consider the factors of length $n$ in which $a^{2^k}ba^{2^k}ba^{2^k}$ appears at different positions), and these factors contain a number of distinct abelian-square factors proportional to the square of their length (see the proof of Proposition \ref{prop:as}), hence proportional to $n^2$.

Finally, $w$ is not  uniformly abelian-square-rich  by Proposition \ref{prop:new}, since it contains arbitrarily large powers of the letter~$a$.
\end{proof}

It remains to prove that a  recurrent (or even a uniformly recurrent) word exists that is abelian-square-rich but not uniformly abelian-square-rich, but such an example is probably more technical and involved.

\section{The Thue-Morse Word}

Let 
\[
{\bf t} =011010011001011010010110\cdots
\]
be the Thue-Morse word, i.e., the fixed point starting with $0$ of the uniform substitution $\mu:0\mapsto 01,1\mapsto 10$. It is well known that $\bf t$ is linearly recurrent and that $\bf t$ does not contain any factor with exponent larger than $2$. In particular, $\bf t$ does not contain \emph{overlaps}, i.e., factors of the form $avava$, with $a\in \{0,1\}$ and $v\in \{0,1\}^*$.

For every $n\geq 4$, the factors of length $n$ of $\bf t$ belong to two disjoint sets: those that start only at even positions in $\bf t$, and those that start only at odd positions in $\bf t$. 
This is a consequence of two facts:  first, that $\bf t$ is overlap-free  (and so $0101$ cannot be preceded by $1$ nor followed by $0$) and second, that $00$ and $11$ are not images under $\mu$ of letters, so they cannot appear at even positions.

Let $p(n)$ be the factor complexity function of $\bf t$. It is known \cite[Proposition 4.3]{Br89} that for every $n\geq 1$ one has 
\begin{equation}\label{eq:tm}
 p(2n)=p(n)+p(n+1), \hspace{6mm} p(2n+1)=2p(n+1).
\end{equation}

We define  $f_{aa}(n)$ (respectively, $f_{ab}(n)$) to be the number of factors of $\bf t$ of length $n$ that begin and end with the same letter (respectively, with different letters).
The next lemma (proved in \cite{CaFiScZa15}) shows that the Thue-Morse word has the property that for every $n$, at least one-third of the length-$n$ factors begin and end with the same letter, and at least one-third of the length-$n$ factors begin and end with different letters. 

\begin{lemma}[\protect{\kern-0.3em}\cite{CaFiScZa15}]
\label{lem:third}
For every $n\geq 2$,
one has $f_{aa}(n)\geq  p(n)/3$ and $f_{ab}(n)\geq  p(n)/3$.
\end{lemma}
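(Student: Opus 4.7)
The plan is to prove the two lower bounds simultaneously by induction on $n$, using recurrences for $f_{aa}$ and $f_{ab}$ that parallel the complexity recurrences \eqref{eq:tm}.

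First I would derive these recurrences. The key observation is that the substitution $\mu$ preserves the first letter of any word but flips the last one, since $\mu(0)=01$ and $\mu(1)=10$. Combining this with the parity-of-position property already noted in the text---every factor of $\bf t$ of length at least $4$ has all of its occurrences at positions of a single parity---a factor $w$ of $\bf t$ of length $2n$ (with $n\geq 2$) arises in exactly one of two ways: either $w=\mu(v)$ for a unique factor $v$ of length $n$ (the case of even-position occurrences), or $w$ is the inner $2n$-letter block of $\mu(v')$ for a unique factor $v'$ of length $n+1$ (the case of odd-position occurrences). A short check of how the boundary letters of $w$ relate to those of $v$ and $v'$ then yields
\[
f_{aa}(2n)=f_{ab}(n)+f_{ab}(n+1), \qquad f_{ab}(2n)=f_{aa}(n)+f_{aa}(n+1),
\]
and the analogous split for odd length gives
\[
f_{aa}(2n+1)=2f_{aa}(n+1), \qquad f_{ab}(2n+1)=2f_{ab}(n+1).
\]

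Second, once these recurrences are established the induction step is immediate: adding the inductive inequalities at indices $n$ and $n+1$ yields the lower bounds on $f_{aa}(2n)$ and $f_{ab}(2n)$, and doubling the one at $n+1$ yields the bounds on $f_{aa}(2n+1)$ and $f_{ab}(2n+1)$; in each case the right-hand side is exactly one third of the corresponding factor complexity by \eqref{eq:tm}. To start the induction I would verify the base cases $n=2$ and $n=3$ by directly listing the factors of $\bf t$ of those lengths---this step is unavoidable, since the identities in \eqref{eq:tm} only become valid from $n\geq 2$ onward.

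The main obstacle is the careful bookkeeping in the first step: showing that each of the four correspondences is in fact a bijection, not merely an injection, so that the recurrences are exact equalities. This amounts to observing that, given $w$, the pre-image $v$ (respectively $v'$) can be uniquely reconstructed---its boundary letters are read off (or complemented) from the boundary letters of $w$, and its interior letters come from the consecutive pairs of interior letters of $w$, each of which is forced to lie in $\{01,10\}=\mu(\{0,1\})$ by the parity constraint. Once this is in place, the remainder of the proof reduces to the routine induction described above.
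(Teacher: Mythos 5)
The paper itself gives no proof of this lemma --- it is imported by citation from \cite{CaFiScZa15} --- so there is no in-text argument to compare yours against; judged on its own, your proof is correct and self-contained, and it is essentially the natural (and, I believe, the cited source's) argument. Both key steps check out. The refined recurrences are exact equalities: for $n\geq 2$ every factor of length $2n$ (resp.\ $2n+1$) occurs at positions of a single parity; the maps $v\mapsto\mu(v)$ and $v'\mapsto\mu(v')$ with its first and last letters removed are injective, have disjoint images, and together cover all factors of that length; and since $\mu(c)$ begins with $c$ and ends with the complement of $c$, the word $\mu(v)$ keeps the first letter of $v$ and complements the last, while the inner block of $\mu(v')$ complements the first letter of $v'$ and keeps the last. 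This yields precisely $f_{aa}(2n)=f_{ab}(n)+f_{ab}(n+1)$, $f_{ab}(2n)=f_{aa}(n)+f_{aa}(n+1)$, $f_{aa}(2n+1)=2f_{aa}(n+1)$ and $f_{ab}(2n+1)=2f_{ab}(n+1)$ (sanity check: $f_{aa}(4)=6=f_{ab}(2)+f_{ab}(3)=2+4$ and $f_{ab}(4)=4=f_{aa}(2)+f_{aa}(3)=2+2$). The base cases also hold: $f_{aa}(2)=f_{ab}(2)=2\geq p(2)/3=4/3$, and $f_{aa}(3)=2$, $f_{ab}(3)=4$, both at least $p(3)/3=2$; the induction step then follows because the right-hand sides of the four recurrences are bounded below by $\bigl(p(n)+p(n+1)\bigr)/3=p(2n)/3$ and $2p(n+1)/3=p(2n+1)/3$, respectively. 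Your parenthetical remark is also well taken: the identities \eqref{eq:tm} as printed hold only for $n\geq 2$ (for $n=1$ they would give $p(3)=2p(2)=8$, whereas $p(3)=6$), which is exactly why the explicit base cases $n=2,3$ cannot be dispensed with.
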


Since from \eqref{eq:tm} we have $p(n)\geq 3(n-1)$ for every $n$, we get the following result.

\begin{corollary}\label{cor:TM}
  For every $n\geq 2$, one has $f_{aa}(n)\geq  n-1$ and $f_{ab}(n)\geq  n-1$.
\end{corollary}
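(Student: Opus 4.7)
The plan is to derive the corollary as an immediate consequence of Lemma \ref{lem:third} together with the elementary factor-complexity bound $p(n) \geq 3(n-1)$ for every $n \geq 2$, which is exactly the hint given in the sentence preceding the statement. Granting this bound, Lemma \ref{lem:third} gives $f_{aa}(n) \geq p(n)/3 \geq n-1$ and symmetrically $f_{ab}(n) \geq p(n)/3 \geq n-1$, so the whole argument reduces to verifying the lower bound on $p$.

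To establish $p(n) \geq 3(n-1)$ I would proceed by strong induction on $n$, driven by the recurrence \eqref{eq:tm}. For the base cases I would compute $p(2) = 4$ and $p(3) = 6$ directly: every binary word of length $2$ occurs in $\mathbf{t}$, while among the eight binary words of length $3$ only $000$ and $111$ are missing (since $\mathbf{t}$ is overlap-free, and in fact cube-free). Both values clearly satisfy the claimed inequality. For the inductive step, assume $p(k) \geq 3(k-1)$ for all $2 \leq k \leq n$, with $n \geq 3$. If $n+1$ is even, write $n+1 = 2m$ with $m \geq 2$; then \eqref{eq:tm} and the inductive hypothesis give
\[
p(n+1) \;=\; p(m) + p(m+1) \;\geq\; 3(m-1) + 3m \;=\; 6m - 3 \;=\; 3n.
\]
If instead $n+1$ is odd, write $n+1 = 2m+1$ with $m \geq 2$; then
\[
p(n+1) \;=\; 2\,p(m+1) \;\geq\; 6m \;=\; 3n.
\]
In either case $p(n+1) \geq 3((n+1) - 1)$, which closes the induction.

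There is no real obstacle to speak of. The only point requiring mild care is to notice that the recurrence \eqref{eq:tm} is only consistent for arguments $m \geq 2$ (for instance $p(2) = p(1) + p(2)$ would be vacuous), which is why the base cases are checked by hand for $n = 2, 3$ before the induction is launched; in the inductive step we have $n+1 \geq 4$, so the condition $m \geq 2$ holds automatically in both parity cases.
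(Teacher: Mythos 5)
Your proof is correct and follows essentially the same route as the paper, which simply combines Lemma \ref{lem:third} with the bound $p(n)\geq 3(n-1)$ deduced from the recurrence \eqref{eq:tm}. You merely supply the inductive verification of that bound (with appropriate care about the base cases $n=2,3$, where the recurrence does not apply), which the paper leaves implicit.
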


We are now ready to prove that the Thue-Morse word is uniformly abelian-square-rich.

\begin{proposition}
 The Thue-Morse word $\bf t$ is uniformly abelian-square-rich.
\end{proposition}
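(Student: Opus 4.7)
The plan is to invoke Lemma~\ref{lem:lin}: since $\bf t$ is linearly recurrent, it suffices to produce a constant $C>0$ with $\sum_{m\leq N}\ass{{\bf t}}{m}\geq CN^{2}$ for every $N$, that is, a quadratic number of distinct abelian-square factors cumulatively up to length $N$. So I do not need to control $\ass{{\bf t}}{2m}$ for each individual $m$; I only need enough abelian squares in total.

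To supply these abelian squares I would exploit the identity ${\bf t}=\mu({\bf t})$. For any factor $x$ of $\bf t$ of length $2n$, its image $\mu(x)$ is again a factor of $\bf t$, now of length $4n$. Writing $x=x_{1}x_{2}$ with $|x_{1}|=|x_{2}|=n$, one has $\mu(x)=\mu(x_{1})\mu(x_{2})$, and each half $\mu(x_{i})$ has Parikh vector $(n,n)$ because $\mu$ sends every letter to a length-two word containing one $0$ and one $1$. Hence $\mu(x)$ is an abelian square of length $4n$. Since $\mu$ is injective, distinct factors $x$ produce distinct abelian squares, so
\[
\ass{{\bf t}}{4n}\ \geq\ p(2n)\ \geq\ 3(2n-1),
\]
the last inequality being the factor-complexity bound that was derived from \eqref{eq:tm} just before Corollary~\ref{cor:TM}.

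Summing over $n=1,\dots,\lfloor N/4\rfloor$ gives $\sum_{m\leq N}\ass{{\bf t}}{m}=\Omega(N^{2})$, and Lemma~\ref{lem:lin} then upgrades this bound to uniform abelian-square-richness.

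The main obstacle is conceptual rather than computational: the natural instinct is to try to control $\ass{{\bf t}}{2m}$ directly at every even length (which is delicate for $m$ odd or $m\equiv 2\pmod 4$, since the clean $\mu$-construction does not immediately apply there). Recognising that Lemma~\ref{lem:lin} only sees the cumulative sum sidesteps this difficulty entirely, because capturing $\Omega(n)$ fresh abelian squares at each length $4n$ via $\mu$ is already enough. All auxiliary facts invoked---stability of $\Fact({\bf t})$ under $\mu$, injectivity of $\mu$, and $p(n)\geq 3(n-1)$---are already recorded in the excerpt, so no additional estimates on $\bf t$ are needed.
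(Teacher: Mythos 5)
Your proof is correct, but it takes a somewhat leaner route than the paper's. The paper also manufactures abelian squares as morphic images of factors, but it works with $\mu^2(u)$ for factors $u$ that begin and end with the same letter, and additionally trims the first and last letters of $\mu^2(u)$ to obtain abelian squares of length $4n-2$; this yields at least $n-1$ distinct abelian squares at every even length, at the cost of invoking Corollary~\ref{cor:TM} (hence the external Lemma~\ref{lem:third} on the counts $f_{aa}(n)$). You instead apply $\mu$ once to \emph{all} factors of length $2n$, getting $\ass{{\bf t}}{4n}\geq p(2n)$, and you correctly observe that Lemma~\ref{lem:lin} only requires the cumulative sum $\sum_{m\leq N}\ass{{\bf t}}{m}$ to be quadratic, so contributing only at lengths divisible by $4$ suffices. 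This removes the dependence on Lemma~\ref{lem:third} entirely and uses nothing beyond the closure of $\Fact({\bf t})$ under $\mu$, the injectivity of $\mu$, linear recurrence, and $p(n)\geq 3(n-1)$ (indeed the generic bound $p(2n)\geq 2n+1$ would already do). What the paper's heavier argument buys is the stronger per-length statement --- linearly many distinct abelian squares at each individual even length, covering both residues modulo $4$ --- which is of independent interest in view of the detailed analysis of $f(n)=\ass{{\bf t}}{2n}$ in the following subsection; your argument, as written, says nothing about lengths congruent to $2$ modulo $4$. (Both proofs share the harmless convention that the inequality $\sum_{m\leq N}\ass{{\bf t}}{m}\geq CN^2$ is meant for $N$ large enough, since the left-hand side vanishes at $N=1$.)
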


\begin{proof}
 Let $u$ be a factor of length $n>1$ of $\bf t$ that begins and ends with the same letter. Since the image of every even-length word under $\mu$ is an abelian square, we have that $\mu^2(u)$ is an abelian-square factor of $\bf t$ of length $4n$ that begins and ends with the same letter. Moreover, the word obtained from $\mu^2(u)$ by removing the first and the last letter  is an abelian-square factor of $\bf t$ of length $4n-2$. So, by Corollary \ref{cor:TM}, $\bf t$ contains at least $n-1$ distinct abelian-square factors of length $4n$ and at least $n-1$ distinct abelian-square factors of length $4n-2$. This implies that for every even $n$ the number of distinct abelian-square factors of $\bf t$ of length $n$ is linear in $n$. Hence, for every $n$ the number of distinct abelian-square factors of $\bf t$ of length at most $n$ is quadratic in $n$. The statement then follows from Lemmas \ref{lem:lin} and \ref{lem:un}.
\end{proof}

\subsection{More detailed analysis for the Thue-Morse word}

Let $f(n)$ denote the number of distinct abelian squares of length
$2n$ (or, equivalently, of order $n$) in the Thue-Morse sequence $\bf t$.
Table~\ref{tabb1} gives the first few terms of this sequence.

\medskip

\begin{table}[H]
\begin{center}
\begin{tabular}{|c|cccccccccccccccccccc}
$n$ & 1& 2& 3& 4& 5& 6& 7& 8& 9&10&11&12&13&14&15&16&17&18&19&20 \\
\hline
$f(n)$ & 2& 4& 4&10& 8&24&10&22&12&36&20&52&24&54&20&46&24&72&32&76 \\
\end{tabular}
\end{center}
\caption{\label{tabb1} First few values of the number of distinct abelian-square factors of order $n$ in the Thue-Morse word.}
\end{table}

\medskip

\noindent We can use the decision procedure discussed in
\cite{Allouche&Rampersad&Shallit:2009,Charlier&Rampersad&Shallit:2012,Goc&Henshall&Shallit:2013,Goc&Mousavi&Shallit:2013,Shallit:2013} to analyze the function $f$
in more detail.
Although in general abelian questions about arbitrary
automatic sequences are {\it not\/} decidable using this method,
the Thue-Morse sequence has symmetries that make it
amenable.  One nice feature of this approach is that it can be almost
entirely automated, using the freely-available {\tt Walnut} package \cite{Mousavi:2016}.

Define
$$d(n) := |{\bf t}[0..n-1]|_0 - {n\over 2} $$
to be the {\it prefix defect function\/} for the Thue-Morse sequence,
that is, the function that counts
the number of $0$'s in a prefix of length $n$ over the amount
``expected to occur''.   (Since ${\bf t} \in (01+10)^\omega$,
we expect $n \over 2$ $0$'s in a prefix of length $n$.)

Recall that we say that an infinite word ${\bf a} = (a_n)_{n \geq 0} \in \Delta^\omega$ is
a {\it $k$-automatic sequence} if there exists a deterministic finite automaton with
output (DFAO) $M = (Q, \Sigma, \delta, q_0, \Delta, \tau)$ that computes $\bf a$
in the following sense:  if the input to $M$ is $(n)_k$, the base-$k$ representation of
$n$, then the output $\tau(\delta(q_0, x))$ is equal to $a_n$.  Here $\tau:Q 
\rightarrow \Delta$.  
Without loss of generality, we assume all automata discussed in this section
take, as input, the base-$2$ representations of numbers starting with the most
significant digit.
See, for example, \cite{Allouche&Shallit:2003}.

Now it is not hard to see that ${\bf d} = 
(d_n)_{n \geq 0}$ is a $2$-automatic sequence.  In fact,
the four-state automaton in Figure~\ref{tmdisc} computes it.  Here the notation
$a/b$ in a state indicates that the state name is $a$ and its output is $b$.
Thus, for example, starting in state $0$, and reading 
the base-$2$ expansion of 23 (namely, $10111$), we reach the state
$3$ with output $-{1 \over 2}$.  And indeed, ${\bf t}[0..22]$ has 11 zeroes,
so its defect is $11-{{23} \over 2} = -{1 \over 2}$.

\begin{figure}[H]
\centering
\includegraphics[scale=0.5]{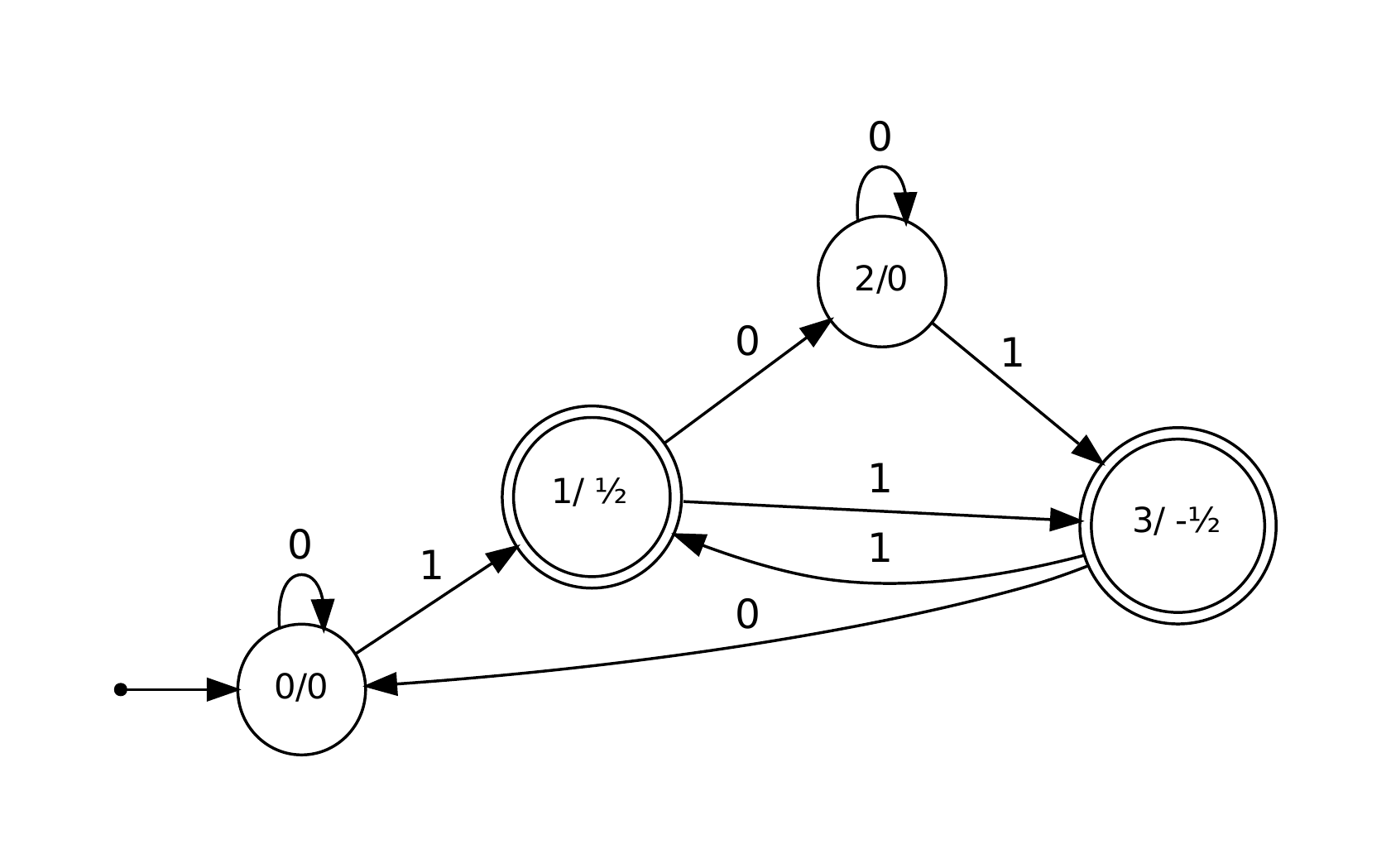} 
\caption{Automaton calculating prefix defect function for {\bf t}.}
\label{tmdisc}
\end{figure}

Now that we have the sequence $\bf d$, 
we define the function $D(i,n)$ be the defect
associated with the factor ${\bf t}[i..i+n-1]$, that is,
$$
D(i,n)  := |{\bf t}[i..i+n-1]|_0 - {n \over 2}.$$
Then we have $D(i,n) = 
d(i+n) - d(i) \in \lbrace -1, -{1\over 2}, 0, {1 \over 2}, 1 \rbrace$.

We can view $D$ as a {\it two-dimensional automatic sequence} 
(see, e.g., \cite[Chapter 14]{Allouche&Shallit:2003}).  Here the input
alphabet is $\lbrace 0,1\rbrace^2$, where the first components of the
input spell out the base-$2$ representation of $i$ and
the second components spell out the base-$2$ representation of $n$.
We write this input as $(i,n)_2$.
(The shorter of the two representations is, if necessary, padded with
leading zeroes so that the two representations can be read in parallel.)
From the DFAO for $\bf d$ we can easily generate an automaton
for the two-dimensional infinite array ${\bf D} = (D(i,n))_{i, n \geq 0}$.
More usefully, 
we can produce $5$ different automata
$A_1, A_2, A_3, A_4, A_5$ accepting
those inputs $(i,n)_2$ for which $D(i,n) =
-1$ (respectively, $-{1\over 2}, 0, {1 \over 2}, 1$). We can think of each
automaton $A_m$, $1 \leq m \leq 5$,
as computing the function ${\mathcal A}_m(i,n)$ that is {\tt true} when
$(i,n)_2$ is
accepted and {\tt false} otherwise.
When we compute
these using {\tt Walnut} we discover that they have 10, 13, 7, 13, 10
states respectively.

Once we have the automata 
$A_1, A_2, A_3, A_4, A_5$, we can create an automaton $T$ accepting
those $(i,n)_2$ such that ${\bf t}[i..i+2n-1]$ is an abelian square.  This
automaton is created by taking the disjunction
of the assertions ${\mathcal A}_m(i,n) \wedge {\mathcal A}_m(i+n,n)$ for $1 \leq m \leq 5$,
and is implemented by an automaton with 36 states.

Finally, using $T$ we can create an automaton $U$ accepting
those $(i,n)_2$ such that 
${\bf t}[i..i+2n-1]$ is a novel abelian square; that is,
an abelian square of length $2n$ that has never appeared previously
in $\bf t$:
$$U(i,n) := T(i,n) \wedge (\forall j\  (j<i) \implies (\exists k\ (k<2n) 
\wedge ({\bf t}[i+k]\not= {\bf t}[j+k]))) .$$
The automaton $U$ has 64 states, and can be computed using the techniques
described in \cite{Goc&Henshall&Shallit:2013}.  Again, we can view
$U$ as computing the function
${\mathcal U}(i,n)$ that is {\tt true} when
$(i,n)_2$ is accepted and {\tt false} otherwise.

Now define $f(n) = | \{ i \ : \ {\mathcal U}(i,n) \} |$.  As a moment's reflection
will reveal, this is the number of distinct abelian squares in $\bf t$
of order $n$ (and length $2n$).   As the techniques in 
\cite{Charlier&Rampersad&Shallit:2012} show, from the transition diagram of
the automaton $U$ we can immediately deduce a so-called {\it linear representation}
for the function $f$,
that is, square matrices
$M_0, M_1$ and vectors $v, w$ such that
$$ f(n) = v M_{a_1} \cdots M_{a_i} w $$
if $(n)_2 = a_1 \cdots a_i$.  
We can minimize the representation using the algorithm in \cite[Section~2.3]{Berstel&Reutenauer:2011}, obtaining
the following linear representation of rank $11$:

\begin{align*}
v & := [1, 0, 0, 0, 0, 0, 0, 0, 0, 0, 0] \\
\ \\
M_0 &:= \left[ \begin {array}{ccccccccccc} 1&0&0&0&0&0&0&0&0&0&0\\ \noalign{\medskip}0&0&1&0
&0&0&0&0&0&0&0\\ \noalign{\medskip}0&0&0&0&1&0&0&0&0&0&0\\ \noalign{\medskip}0&0&0&0
&0&0&1&0&0&0&0\\ \noalign{\medskip}0&0&0&0&0&0&0&0&1&0&0\\ \noalign{\medskip}0&0&0&0
&0&0&0&0&0&0&1\\ \noalign{\medskip}0&0&{\frac {64}{13}}&1&-{\frac {171}{26}}&-{
\frac {69}{26}}&0&0&{\frac {43}{26}}&{\frac {9}{13}}&{\frac {51}{26}}
\\ \noalign{\medskip}0&0&{\frac {90}{13}}&-1&-{\frac {275}{26}}&{\frac {9}{26}}&0&2&
{\frac {69}{26}}&-{\frac {17}{13}}&{\frac {51}{26}}\\ \noalign{\medskip}0&0&-2&0&1&0
&0&0&2&0&0\\ \noalign{\medskip}0&0&{\frac {68}{13}}&0&-{\frac {88}{13}}&-{\frac {33}
{13}}&0&0&{\frac {20}{13}}&{\frac {12}{13}}&{\frac {34}{13}}\\ \noalign{\medskip}0&0
&{\frac {54}{13}}&0&-{\frac {89}{13}}&-{\frac {22}{13}}&0&0&{\frac {35}{13}}&{\frac 
{8}{13}}&{\frac {27}{13}}\end {array} \right] 
\end{align*}

\begin{align*}
M_1 & := \left[ \begin {array}{ccccccccccc} 0&1&0&0&0&0&0&0&0&0&0\\ \noalign{\medskip}0&0&0&
1&0&0&0&0&0&0&0\\ \noalign{\medskip}0&0&0&0&0&1&0&0&0&0&0\\ \noalign{\medskip}0&0&0&0
&0&0&0&1&0&0&0\\ \noalign{\medskip}0&0&0&0&0&0&0&0&0&1&0\\ \noalign{\medskip}0&0&-{
\frac {8}{13}}&0&-{\frac {8}{13}}&{\frac {10}{13}}&0&0&{\frac {16}{13}}&{\frac {7}{
13}}&-{\frac {4}{13}}\\ \noalign{\medskip}0&0&-{\frac {87}{13}}&-1&{\frac {203}{26}}
&{\frac {3}{26}}&2&0&-{\frac {3}{26}}&{\frac {16}{13}}&-{\frac {61}{26}}
\\ \noalign{\medskip}0&0&-{\frac {7}{13}}&1&{\frac {25}{26}}&-{\frac {93}{26}}&0&0&{
\frac {15}{26}}&{\frac {37}{13}}&-{\frac {7}{26}}\\ \noalign{\medskip}0&0&{\frac {20
}{13}}&0&-{\frac {45}{13}}&-{\frac {12}{13}}&0&0&{\frac {25}{13}}&{\frac {28}{13}}&-
{3\over{13}}\\ \noalign{\medskip}0&0&-{\frac {40}{13}}&0&{\frac {25}{13}}&-{2\over{13}}&0&0&{\frac {
15}{13}}&{\frac {22}{13}}&-{\frac {7}{13}}\\ \noalign{\medskip}0&0&-{\frac {36}{13}}
&0&{\frac {29}{13}}&-{\frac {20}{13}}&0&0&{\frac {7}{13}}&{\frac {38}{13}}&-{\frac {
5}{13}}\end {array} \right]  \\
\ \\
w & := [1,2,4,4,10,8,24,10,22,12,36]^T
\end{align*}

We have therefore proved

\begin{theorem}
The function $f$ counting the number of distinct
abelian squares of order $n$
of $\bf t$ is given by 
$f(n) = v M_{a_1} \cdots M_{a_i} w $
if $(n)_2 = a_1 \cdots a_i$, for the vectors $v,w$ and matrices
$M_0, M_1$ given above.
\end{theorem}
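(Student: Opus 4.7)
The plan is to construct, in stages, an automatic description of the set of first occurrences of abelian squares in $\bf t$, and then to extract the counting function $f$ via the standard counting procedure for automatic predicates.

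First I would verify that the prefix defect sequence $\bf d$ is $2$-automatic and is indeed computed by the $4$-state DFAO in Figure~\ref{tmdisc}. This is routine: $\bf t$ is $2$-automatic, the partial sums $|{\bf t}[0..n-1]|_0$ can be tracked by a transducer, and since ${\bf t}\in(01+10)^\omega$ the defect $d(n)$ takes only the five values $-1,-\tfrac12,0,\tfrac12,1$; so the sequence $\bf d$ is a $2$-automatic sequence over that finite alphabet and a small DFAO suffices (one checks the transitions directly). The identity $D(i,n)=d(i+n)-d(i)$ then shows that each of the predicates $\mathcal{A}_m(i,n)\equiv(D(i,n)=c_m)$ is automatic in $(i,n)_2$, so one can synthesize the five two-dimensional automata $A_1,\ldots,A_5$ (with the reported state counts $10,13,7,13,10$) by a standard product-with-addition construction.

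Next I would observe that on the binary alphabet a factor ${\bf t}[i..i+2n-1]$ is an abelian square iff its two halves have the same Parikh vector, which is equivalent to $D(i,n)=D(i+n,n)$. Hence the abelian-square automaton $T$ is obtained as the disjunction $\bigvee_{m=1}^{5}\mathcal{A}_m(i,n)\wedge\mathcal{A}_m(i+n,n)$; determinization and minimization give the reported $36$-state automaton. The first-occurrence predicate $\mathcal{U}(i,n)$ is then a first-order formula in the structure $\langle\mathbb{N},+,<,\bf t\rangle$, namely $T(i,n)\wedge\forall j<i\,\exists k<2n\,({\bf t}[i+k]\neq{\bf t}[j+k])$, which can be converted algorithmically into a DFA over $(i,n)_2$ by the constructions of Goc--Henshall--Shallit, yielding $U$ with $64$ states.

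Finally, I would apply the Charlier--Rampersad--Shallit counting construction to $U$: this produces square matrices $M_0,M_1$ and vectors $v,w$ such that $f(n)=\sum_i[\![\mathcal{U}(i,n)]\!]=vM_{a_1}\cdots M_{a_i}w$ whenever $(n)_2=a_1\cdots a_i$. The raw representation has dimension equal to the number of states of $U$, but the Sch\"utzenberger--Berstel--Reutenauer minimization algorithm, as described in \cite[\S 2.3]{Berstel&Reutenauer:2011}, reduces it to the canonical rank, which is $11$ here, yielding the displayed matrices. One then cross-checks the output against the first twenty values listed in Table~\ref{tabb1} to guard against arithmetic errors introduced by the minimization.

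The main obstacle is bookkeeping rather than mathematical subtlety: each construction (the defect DFAO, the five $A_m$, the automaton $T$, the first-occurrence automaton $U$, the linear representation, and its minimization) is straightforward in principle but the state counts and matrix entries are large enough that the computations must be carried out and verified inside a tool such as \texttt{Walnut}; a human-readable proof therefore reduces to documenting the pipeline and appealing to the decidability theorems invoked at each step, together with numerical verification of the first several values of $f$.
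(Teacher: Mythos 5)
Your proposal reproduces the paper's argument essentially verbatim: the same pipeline of the prefix defect DFAO, the five automata $A_1,\ldots,A_5$ for the values of $D(i,n)$, the abelian-square automaton $T$ via $\bigvee_m \mathcal{A}_m(i,n)\wedge\mathcal{A}_m(i+n,n)$, the novel-occurrence automaton $U$, and the Charlier--Rampersad--Shallit counting construction followed by minimization to rank $11$. The approach and even the intermediate state counts match the paper's proof, so there is nothing to add.
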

A sequence that can be computed in this manner is called $2$-regular;
see \cite{Allouche&Shallit:1992}.

From the matrices above we can, using the method described in
\cite{Goc&Mousavi&Shallit:2013}, obtain defining recursive relations
for $f$:

\begin{align*}
f(8n) &= -2  f(2n) + 3 f(4n) \\
f(8n+6) &= 4  f(2n+1) -  f(4n+1) + 4 f(4n+2) + f(4n+3) + f(8n+1) - f(8n+2) +  f(8n+3) \\
& \quad - 2 f(8n+4) + 2 f(8n+5) \\
f(16n+3) &= f(2n) - {3 \over 2}  f(2n+1) - f(4n) - {{15} \over 4} f(4n+1) +  5 f(4n+2) +  {5 \over 2} f(8n+1) \\
& \quad + {{13} \over 4}   f(8n+3) - {9 \over 4}  f(8n+4) + {1 \over 2}    f(16n+1) - {1 \over 2}  f(16n+2) \\
f(16n+4) &= f(2n) -{3 \over 2}  f(2n+1) - f(4n) -{7 \over 4}  f(4n+1) +  f(4n+2) + {3 \over 2}  f(8n+1)  \\
& \quad + 2  f(8n+2) + {5 \over 4}  f(8n+3) - {1 \over 4} f(8n+4) {1 \over 2}  f(16n+1) -{1 \over 2}  f(16n+2) \\
f(16n+5) &=  f(2n) + {1 \over 2}  f(2n+1) - f(4n) -{{11}\over 4} f(4n+1) + 3  f(4n+2) + {5 \over 2}  f(8n+1) \\
& \quad + {9 \over 4}   f(8n+3) -{5 \over 4}  f(8n+4) + {1 \over 2}  f(16n+1) -{1 \over 2}  f(16n+2) \\
f(16n+7) &= 2 f(4n+1) -2 f(4n+2) + f(8n+3) \\
f(16n+9) &= f(4n+1) -2 f(4n+2) + f(4n+3) + f(8n+3) +  f(8n+4) \\
f(16n+11) &= 4  f(2n+1) -  f(4n+1) +2  f(4n+2) +  f(4n+3) -  f(8n+3) - f(8n+4) +2  f(8n+5) \\
f(16n+12) &= 8  f(2n+1) -3  f(4n+1) +10  f(4n+2) +3  f(4n+3) +2  f(8n+1) -2   f(8n+2) \\
& \quad +  f(8n+3) -5   f(8n+4) +5  f(8n+5) \\
f(16n+13) &= 2  f(2n+1) -2  f(4n+1) +4 f(4n+2) +  f(4n+3) -2  f(8n+3)  \\
& \quad -2  f(8n+4) +4  f(8n+5) \\
f(16n+15) &= -2  f(2n+1) +  f(4n+3) +2  f(8n+7) \\
f(32n+1) &= -4  f(2n) +2  f(2n+1) +4  f(4n) +7 f(4n+1) -12  f(4n+2) -  f(8n+1) 
\\
& \quad -5  f(8n+3) +5  f(8n+4) -2  f(16n+1) +2  f(16n+2)  \\
f(32n+2) &= -4  f(2n) +4  f(2n+1) +4  f(4n) +8  f(4n+1) -14  f(4n+2)  - 4  f(8n+1) \\
& \quad +  f(8n+2) -5   f(8n+3) +5  f(8n+4) -3  f(16n+1) +4  f(16n+2) \\
f(32n+10) &= 3  f(2n) -{5 \over 2}    f(2n+1) -3  f(4n) -{{17} \over 4}  f(4n+1) +13  f(4n+2) +{{17} \over 2}  f(8n+1) \\
& \quad -  f(8n+2) +{{35} \over 4}  f(8n+3) -{{19} \over 4}  f(8n+4) +{3 \over 2}  f(16n+1) -{3 \over 2}  f(16n+2) \\
f(32n+17) &= 2  f(4n+1) -4  f(4n+2) + 2  f(8n+3) + 2  f(8n+4) +   f(8n+5) \\
f(32n+18) &= 14  f(2n+1) +4  f(4n+1) -4  f(4n+2) +  f(4n+3) +  f(8n+1) -  f(8n+2) \\
& \quad +6  f(8n+3) +  f(8n+4) +2  f(8n+5) \\
f(32n+26) &= -10  f(2n+1) -6  f(4n+1) +8  f(4n+2) +3  f(4n+3) -  f(8n+1) +  f(8n+2) \\
& \quad -8  f(8n+3) -3  f(8n+4) +5  f(8n+5) +3  f(8n+7) +2  f(16n+10)
\end{align*}

\begin{corollary}
$f(4^n - 1) = (4^{n+1} - 4)/3$ and
$f(3 \cdot 2^n) = 14\cdot 2^n - 4$ for $n \geq 1$. 
\label{cor9}
\end{corollary}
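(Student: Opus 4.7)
The plan is to prove both identities by direct induction, feeding the right specific arguments into the recurrences listed just above the corollary. I would handle the two formulas separately, since they use different recurrences and the first one is more delicate.

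For $f(3\cdot 2^n)=14\cdot 2^n-4$, the approach is a straightforward strong induction with two base cases. The key observation is that if we set $m=3\cdot 2^{n-3}$ then $8m=3\cdot 2^n$, $4m=3\cdot 2^{n-1}$, and $2m=3\cdot 2^{n-2}$, so the recurrence $f(8m)=-2f(2m)+3f(4m)$ becomes
\[
f(3\cdot 2^n) \;=\; -2f(3\cdot 2^{n-2})+3f(3\cdot 2^{n-1})
\]
valid for $n\ge 3$. The base cases $f(6)=24$ and $f(12)=52$ are read off Table~\ref{tabb1}, and a one-line computation shows the closed form satisfies the same recurrence, completing the induction.

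For $f(4^n-1)=(4^{n+1}-4)/3$, the main obstacle is that the recurrence for $f(16m+15)$ introduces \emph{three} arguments, one of which is not of the target form. I would handle this by introducing the auxiliary sequence $a_n:=f(4^n-1)$ and $b_n:=f(2\cdot 4^n-1)$ (whose binary expansions are strings of $2n$ and $2n+1$ ones, respectively). Substituting $m=4^{n-2}-1$ into $f(16m+15)=-2f(2m+1)+f(4m+3)+2f(8m+7)$ yields
\[
a_n \;=\; a_{n-1}+2b_{n-1}-2b_{n-2},
\]
and substituting $m=2\cdot 4^{n-2}-1$ into the same recurrence yields
\[
b_n \;=\; 2a_n-2a_{n-1}+b_{n-1},
\]
both valid for $n\ge 2$. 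This gives a closed coupled system. The base values $a_1=4$, $a_2=20$, $b_1=10$ are read from the table, and $b_2=42$ is produced by the second recurrence.

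With this system in hand, the proof reduces to guessing the companion formula $b_n=(2\cdot 4^{n+1}-2)/3$ (pinned down from $b_1$ and $b_2$ by assuming the form $c\cdot 4^n+d$), and then verifying by a routine induction that the pair $\bigl(a_n,b_n\bigr)=\bigl((4^{n+1}-4)/3,\;(2\cdot 4^{n+1}-2)/3\bigr)$ satisfies both recurrences; since each recurrence for $a_n$ or $b_n$ produces differences of consecutive terms that collapse to multiples of $4^n$, the verification is immediate. I expect the only conceptual step to be the introduction of $b_n$; everything else is arithmetic on the recurrences already derived from the linear representation, so no new combinatorial input about $\bf t$ is required beyond the regularity encoded in the matrices $M_0,M_1$.
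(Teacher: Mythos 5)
Your proposal is correct and follows the route the paper intends: the corollary is stated as a consequence of the displayed recurrences, and you use exactly the relevant ones ($f(8n)=-2f(2n)+3f(4n)$ for the second identity, $f(16n+15)=-2f(2n+1)+f(4n+3)+2f(8n+7)$ for the first), with the auxiliary sequence $b_n=f(2\cdot 4^n-1)$ being precisely the right device to close the coupled system; the base values and the inductive arithmetic all check out against Table~\ref{tabb1}.
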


We suspect the values in Corollary~\ref{cor9} are, respectively, the
local minima (maxima) of $f(n)/n$, but we do not have a proof yet.

\section{Sturmian Words}

In this section we fix the alphabet $\Sigma=\{\sa{a,b}\}$.

Recall that a (finite or infinite) word $w$ over $\Sigma$ is
\emph{balanced} if and only if for every pair of
factors $u,v$ of $w$ of the same
length, one has $\left| |u|_{\sa{a}}-|v|_{\sa{a}} \right| \le 1$.

We start with a simple lemma.

\begin{lemma}\label{lem:bal}
 Let $w$ be a finite balanced word over $\Sigma$. Then for every $k>0$,  $P(w) \equiv (0,0)$ (mod $k$) if and only if $w$ is an abelian $k$-power.
\end{lemma}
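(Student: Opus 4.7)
The forward direction is essentially immediate: if $w = v_1 v_2 \cdots v_k$ with all $v_i$ sharing a common Parikh vector, then $P(w) = k \cdot P(v_1)$, so certainly $P(w) \equiv (0,0) \pmod{k}$. I would dispatch this in one sentence.

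For the converse, suppose $P(w) = (k\alpha, k\beta)$ for nonnegative integers $\alpha, \beta$, and set $m = \alpha + \beta$, so $|w| = km$. The plan is to split $w$ into $k$ consecutive blocks $v_1, v_2, \ldots, v_k$ each of length $m$, and show that the balanced hypothesis forces each block to have Parikh vector $(\alpha, \beta)$. Let $n_i = |v_i|_{\sa{a}}$ for $1 \le i \le k$. Since $w$ is balanced and the $v_i$ all have the same length, we have $|n_i - n_j| \le 1$ for all $i, j$; hence there is an integer $x$ such that each $n_i \in \{x, x+1\}$.

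Now comes the small numerical step that drives everything. If exactly $s$ of the $k$ blocks satisfy $n_i = x+1$ (and the remaining $k-s$ satisfy $n_i = x$), then
\[
k\alpha \;=\; \sum_{i=1}^{k} n_i \;=\; kx + s,
\]
so $s = k(\alpha - x)$. Since $0 \le s \le k$, the only possibilities are $s = 0$ or $s = k$, which in turn means that every $n_i$ equals the same value, namely $\alpha$. Then $|v_i|_{\sa{b}} = m - \alpha = \beta$ for each $i$, and so $P(v_i) = (\alpha, \beta)$ for all $i$; thus $w$ is an abelian $k$-power.

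The main (and really only) obstacle is ensuring that the divisibility congruence eliminates the intermediate possibilities $0 < s < k$; but since $s$ is forced to be a multiple of $k$ by the Parikh-vector hypothesis, and is simultaneously constrained to lie in $[0, k]$ by the two-letter alphabet and the balance property, the conclusion is automatic. Everything else is bookkeeping about lengths.
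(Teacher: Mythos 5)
Your proof is correct and follows essentially the same route as the paper's: decompose $w$ into $k$ consecutive blocks of equal length and use the balance property to force all blocks to have the same Parikh vector. The only difference is that the paper dispatches the key step with ``otherwise $w$ would not be balanced,'' whereas you spell out the counting argument (all $|v_i|_{\sa{a}}$ lie in $\{x,x+1\}$ and the congruence forces them all equal), which is a welcome amplification rather than a departure.
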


\begin{proof}
Let $w$ be balanced and $P(w)=(ks,kt)$, for a positive integer $k$ and some  $s,t\geq 0$. Then we can write $w=v_1v_2\cdots v_k$ where each $v_i$ has length $s+t$. Now each $v_i$ must have Parikh vector equal to $(s,t)$, otherwise $w$ would not be balanced, whence the `only if' part of the statement follows. The `if' part is straightforward.
\end{proof}

A binary infinite word is \emph{Sturmian} if and only if it is balanced and aperiodic. Sturmian words are precisely the infinite words having $n+1$ distinct factors of length $n$ for every $n\geq 0$. There are many other equivalent definitions of Sturmian words. A classical reference on Sturmian words is \cite[Chapter 2]{LothAlg}. Let us recall the definition of Sturmian words as codings of a rotation.

We fix the torus $I=\mathbb{R}/\mathbb{Z}=[0,1)$. Given $\delta,\gamma$ in $I$, if $\delta > \gamma$, we use the notation $[\delta,\gamma)$ for the interval $[\delta,1)\cup [0,\gamma)$. Recall that given a real number $\alpha$,  $\lfloor \alpha \rfloor$ is the greatest integer smaller than or equal to $\alpha$, $\lceil \alpha \rceil$ is the least integer greater than or equal to $\alpha$, and $\{\alpha\}=\alpha-\lfloor \alpha \rfloor$ is the fractional part of  $\alpha$. 
Notice that $\{-\alpha\}= 1-\{\alpha\}$ for non-integer $\alpha$. 

Let $\alpha\in I$ be irrational, and $\rho\in I$. 
The Sturmian word $s_{\alpha,\rho}$ (respectively, $s'_{\alpha,\rho}$) of  \emph{angle} $\alpha$ and \emph{initial point} $\rho$ is the infinite word $a_{0}a_{1}a_{2}\cdots$ defined by
$$a_{n} =
\begin{cases}
		\sa{b},  & \text {if } \{ \rho + n\alpha \}\in I_{\sa{b}};\\
		\sa{a},  & \text{if } \{ \rho + n\alpha \}\in I_{\sa{a}}.
\end{cases}
$$
where $I_{\sa{b}}=[0,1-\alpha)$ and $I_{\sa{a}}=[1-\alpha,1)$   (respectively,
$I_{\sa{b}}=(0,1-\alpha]$ and $I_{\sa{a}}=(1-\alpha,1]$).

In other words, take the unit circle and consider a point initially in position $\rho$. Then start rotating this point on the circle (clockwise) by an angle $\alpha$, $2\alpha$, $3\alpha$, etc. For each rotation, take the letter $\sa{a}$ or $\sa{b}$ associated with the interval within which the point falls. The infinite sequence obtained in this way is the Sturmian word $s_{\alpha,\rho}$ (or $s'_{\alpha,\rho}$, depending on the choice of the two intervals). See \figurename~\ref{Fig:gab1} for an illustration.

\begin{figure}[!ht]
\centering
\includegraphics[scale=0.3]{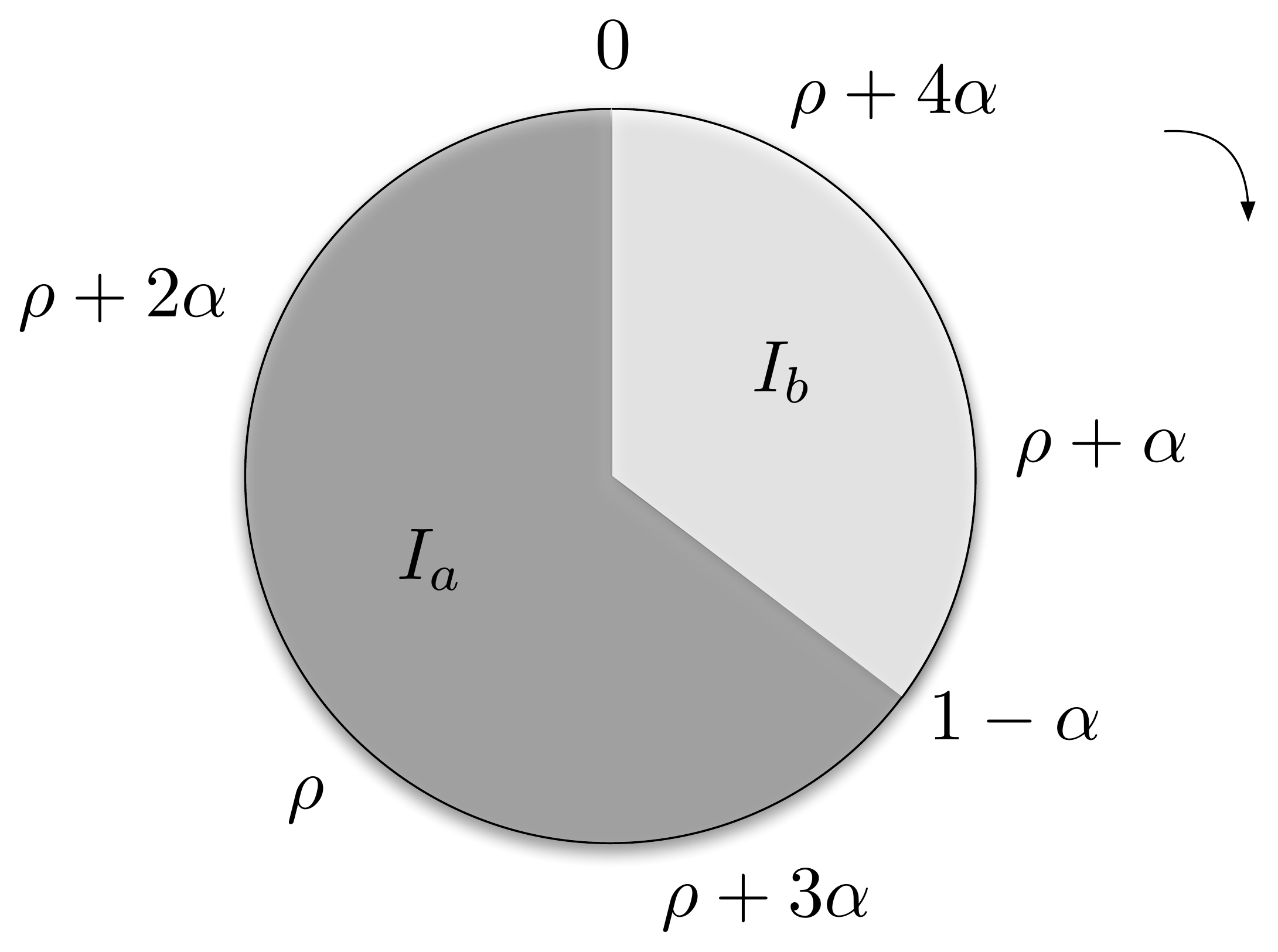} 
\caption{The rotation of angle $\alpha=\phi-1$ (where $\phi=(1+\sqrt 5 )/2\approx 1.618$ is the golden ratio) and initial point $\rho=\alpha$ generating the Fibonacci word $F=s_{\phi-1,\phi-1}=\sa{abaababaabaabab}\cdots$. \label{Fig:gab1}}
\end{figure}

For example, if $\phi=(1+\sqrt 5 )/2\approx 1.618$ is the golden ratio,  the Sturmian word $$F=s_{\phi-1,\phi-1}=\sa{abaababaabaababaababaabaababaabaab}\cdots$$ is called the \emph{Fibonacci word}. 

A Sturmian word for which $\rho=\alpha$, such as the Fibonacci word, is called \emph{characteristic}. Note that for every $\alpha$ one has $s_{\alpha,0}=\sa{b}s_{\alpha,\alpha}$ and $s'_{\alpha,0}=\sa{a}s_{\alpha,\alpha}$.

An equivalent way to visualize the coding of a rotation consists of fixing the point and rotating the intervals.  In this representation, the interval $I_{\sa{b}}=I_{\sa{b}}^{0}$ is rotated at each step, so that after $i$ rotations it is transformed into the interval $I_{\sa{b}}^{-i}=[\{-i\alpha\},\{-(i+1)\alpha\})$, while $I_{\sa{a}}^{-i}=I\setminus I_{\sa{b}}^{-i}$. 

This representation is convenient, since one can read within it not only a Sturmian word, but also all of its factors. More precisely, for every positive integer $n$, the factor of length $n$ of $s_{\alpha,\rho}$ starting at position $j\geq 0$ is determined by the value of $\{\rho+j\alpha\}$ only. Indeed, for every $j$ and $i$, we have
$$a_{j+i} = \begin{cases}
\sa{b},  & \text{if $\{\rho + j\alpha\}\in I_{\sa{b}}^{-i}$;}\\
\sa{a}, & \text{if $\{\rho + j\alpha\}\in I_{\sa{a}}^{-i}$.}
\end{cases}
$$
As a consequence, we have that given a Sturmian word $s_{\alpha,\rho}$ and  a positive integer $n$, the $n+1$ different factors of $s_{\alpha,\rho}$ of length $n$ are completely determined by the intervals $I_{\sa{b}}^{0}, I_{\sa{b}}^{-1},\ldots, I_{\sa{b}}^{-(n-1)}$, that is, only by the points $\{-i\alpha\}$ for $0\leq i< n$. In particular, they do not depend on $\rho$, so that the set of factors of $s_{\alpha,\rho}$ is the same as the set of factors of $s_{\alpha,\rho'}$ for every $\rho$ and $\rho'$.
Hence, from now on, we let $s_{\alpha}$ denote a Sturmian word of angle $\alpha$.

If we arrange the $n+2$ points $0,1,\{-\alpha\},\{-2\alpha\},\ldots,\{-n \alpha\}$ in increasing order, we determine a partition of $I$ in $n+1$ subintervals, $L_0(n),L_{1}(n),\ldots,L_{n}(n)$. Each of these subintervals is in bijection with a different factor of length $n$ of 
 $s_\alpha$ (see \figurename~\ref{Fig:gab3}). 

\begin{figure}[!ht]
\centering
\includegraphics[scale=0.5]{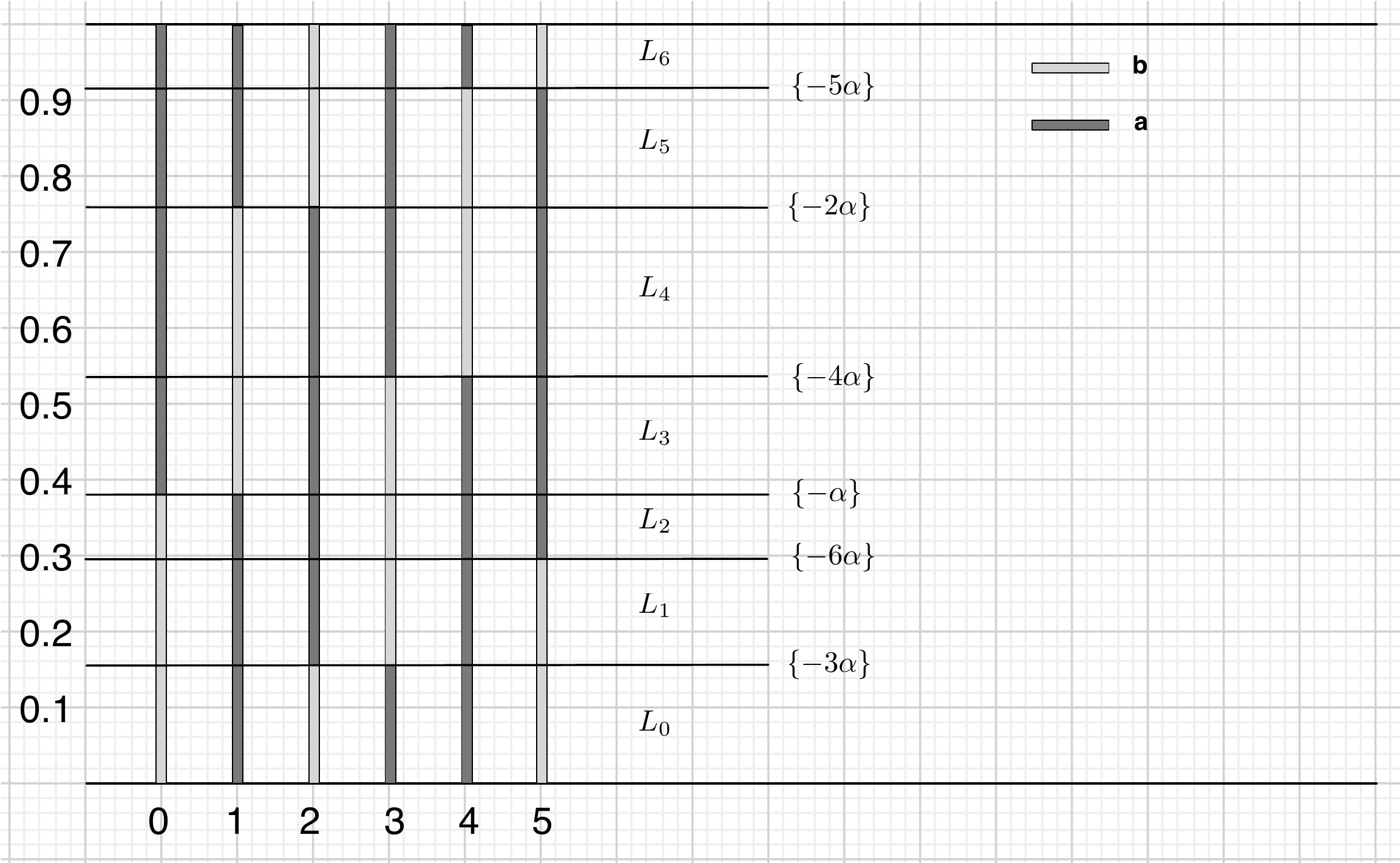} 
\caption{%
The points $0$, $1$ and  $\{-\alpha\}$, $\{-2\alpha\}$, $\{-3\alpha\}$, $\{-4\alpha\}$, $\{-5\alpha\}$, $\{-6\alpha\}$ ($\alpha=\phi-1$),
arranged in increasing order, define the intervals $L_{0}(6)\approx[0,0.146)$, $L_{1}(6)\approx[0.146,0.292)$, $L_{2}(6)\approx[0.292,0.382)$, $L_{3}(6)\approx[0.382,0.528)$, $L_{4}(6)\approx[0.528,0.764)$, $L_{5}(6)\approx[0.764,0.910)$, $L_{6}(6)\approx[0.910,1)$. Each interval is associated with one of the factors of length $6$ of the Fibonacci word, respectively $\sa{babaab},\sa{baabab},\sa{baabaa},\sa{ababaa},\sa{abaaba},\sa{aababa},\sa{aabaab}$. \label{Fig:gab3}}
\end{figure}

Recall that a factor of length $n$ of a Sturmian word  $s_{\alpha}$ has a Parikh vector equal either to $(\lfloor n\alpha \rfloor , n-\lfloor n\alpha \rfloor )$ (in which case it is called \emph{light}) or to $(\lceil n\alpha \rceil , n-\lceil n\alpha \rceil) $ (in which case it is called \emph{heavy}).
The following proposition relates the intervals $L_{i}(n)$ to the Parikh vectors of the factors of length $n$ (see \cite{dlt,tcs16,Rigo13}).

\begin{proposition}\label{pro:main}
Let $s_{\alpha}$ be a Sturmian word of angle $\alpha$, and $n$ a positive integer.
Let $t_{i}$ be the factor of length $n$ associated with the interval $L_{i}(n)$. Then $t_{i}$ is
  heavy if  $L_{i}(n)\subset [\{-n\alpha\},1)$, while it is light if  $L_{i}(n)\subset [0,\{-n\alpha\})$.
\end{proposition}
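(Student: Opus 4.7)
The plan is to determine, for each $x \in [0,1)$, the number $N(x)$ of letters $\sa{a}$ appearing in the length-$n$ factor associated with $x$, and to show that $N(x) = \lfloor n\alpha\rfloor$ on $[0,\{-n\alpha\})$ while $N(x) = \lceil n\alpha\rceil$ on $[\{-n\alpha\},1)$. Since $\{-n\alpha\}$ is itself one of the partition points defining the $L_i(n)$, each $L_i(n)$ will then lie entirely on one side of it, and the proposition follows. By the rotation characterization of $s_\alpha$ described just before the proposition,
\[
N(x) \;=\; \sum_{i=0}^{n-1} \chi_{I_{\sa{a}}^{-i}}(x),
\]
and since $I_{\sa{a}}^{-i} = I \setminus I_{\sa{b}}^{-i}$ is the circular arc of length $\alpha$ from $\{-(i+1)\alpha\}$ to $\{-i\alpha\}$ traversed in the positive direction, the indicator $\chi_{I_{\sa{a}}^{-i}}$ jumps up at $\{-(i+1)\alpha\}$ and down at $\{-i\alpha\}$.

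I would then sum these jumps at each of the partition points $0, \{-\alpha\}, \{-2\alpha\}, \ldots, \{-n\alpha\}$. At an interior point $\{-j\alpha\}$ with $1 \le j \le n-1$, exactly two of the $n$ indicators change: $\chi_{I_{\sa{a}}^{-(j-1)}}$ jumps up and $\chi_{I_{\sa{a}}^{-j}}$ jumps down, so the contributions cancel and $N$ is unchanged. At $\{-n\alpha\}$ only $\chi_{I_{\sa{a}}^{-(n-1)}}$ jumps (up, by $+1$), since no index $j = n$ is present; symmetrically, at $0 = \{-0\cdot\alpha\}$ only $\chi_{I_{\sa{a}}^{0}}$ jumps (down, by $-1$). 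Consequently $N$ takes exactly two values: some integer $m$ on $[0,\{-n\alpha\})$ and $m+1$ on $[\{-n\alpha\},1)$.

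Finally, I would identify $m$ by integration:
\[
n\alpha \;=\; \sum_{i=0}^{n-1}|I_{\sa{a}}^{-i}| \;=\; \int_0^1 N(x)\,dx \;=\; m\,\{-n\alpha\} + (m+1)\bigl(1-\{-n\alpha\}\bigr) \;=\; m + \{n\alpha\},
\]
using $1 - \{-n\alpha\} = \{n\alpha\}$ (valid since $\alpha$ is irrational). Therefore $m = \lfloor n\alpha\rfloor$, so $m+1 = \lceil n\alpha\rceil$, which gives the claim. The main obstacle I anticipate is the bookkeeping of the cancellation of jumps at the $n-1$ interior partition points; once the orientation of the arcs $I_{\sa{a}}^{-i}$ is fixed and one checks that interior points are simultaneously left endpoints of one $I_{\sa{a}}^{-(j-1)}$ and right endpoints of one $I_{\sa{a}}^{-j}$, the remainder is a direct calculation that also reproduces the familiar three-distance behaviour underlying the partition.
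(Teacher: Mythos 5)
Your argument is correct. Note first that the paper does not actually prove Proposition~\ref{pro:main}; it is quoted from the literature (the references \cite{dlt,tcs16,Rigo13} given just before the statement), so there is no in-text proof to match against. Your jump-counting argument is a sound self-contained proof: the points $\{-j\alpha\}$, $0\le j\le n$, are pairwise distinct by irrationality, so at each interior partition point exactly one arc $I_{\sa{a}}^{-(j-1)}$ opens and one arc $I_{\sa{a}}^{-j}$ closes, the net jump of $N$ is $+1$ only at $\{-n\alpha\}$ and $-1$ only at $0$, and the normalization $\int_0^1 N = n\alpha$ correctly forces $m=\lfloor n\alpha\rfloor$; since $\{-n\alpha\}$ is one of the points defining the partition into the $L_i(n)$, each $L_i(n)$ lies entirely on one side, as you say. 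The route usually taken in the cited sources is shorter and purely arithmetic: from the defining condition $a_k=\sa{a}\iff\lfloor \rho+(k+1)\alpha\rfloor-\lfloor\rho+k\alpha\rfloor=1$ one telescopes to get that the factor associated with the point $x$ has exactly $\lfloor x+n\alpha\rfloor$ occurrences of $\sa{a}$, and $\lfloor x+n\alpha\rfloor$ equals $\lfloor n\alpha\rfloor$ or $\lceil n\alpha\rceil$ according as $x<\{-n\alpha\}$ or $x\ge\{-n\alpha\}$. Your approach costs more bookkeeping but buys a geometric picture (the three-distance/level-set structure of $N$ on the circle) that the telescoping identity hides; both are valid, and yours uses only the rotation description already set up in the paper.
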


\begin{example}
 Let $\alpha=\phi-1\approx 0.618$ and $n=6$. We have $6\alpha\approx 3.708$, so that $\{-6\alpha\}\approx 0.292$. The reader can see in \figurename~\ref{Fig:gab3} that the factors of length $6$ 
corresponding to intervals above (respectively, below) $\{-6\alpha\}\approx 0.292$ all have Parikh vector $(4,2)$ (respectively, $(3,3)$). That is, the intervals $L_{0}$ and $L_{1}$ are associated with light factors (\sa{babaab}, \sa{baabab}), while the intervals $L_{2}$ to $L_{6}$ are associated with  heavy factors (\sa{baabaa}, \sa{ababaa}, \sa{abaaba}, \sa{aababa}, \sa{aabaab}). 
\end{example}

Observe that, by Lemma \ref{lem:bal}, every factor of a Sturmian word having even length and containing an even number of $\sa{a}$'s (or, equivalently, of $\sa{b}$'s) is an abelian square. The following proposition relates the abelian-square factors of a Sturmian word of angle $\alpha$  with the arithmetic properties of $\alpha$.

\begin{proposition}\label{prop:hl}
Let $s_{\alpha}$ be a Sturmian word of angle $\alpha$, and $n$ a positive even integer.
Let $t_{i}$ be the factor of length $n$ associated with the interval $L_{i}(n)$. Then $t_{i}$ is
an abelian square if and only if $L_{i}(n)\subset [\{-n\alpha\},1)$ if $\lfloor n \alpha \rfloor$ is even, or $L_{i}(n)\subset [0,\{-n\alpha\})$ if $\lfloor n \alpha \rfloor$ is odd.
\end{proposition}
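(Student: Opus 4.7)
The strategy is to combine the light/heavy dichotomy of Proposition \ref{pro:main} with Lemma \ref{lem:bal} through a single parity observation. First I would note that any factor $t_i$ of the Sturmian word $s_\alpha$ is balanced (since $s_\alpha$ itself is balanced), so Lemma \ref{lem:bal} applied with $k=2$ tells us that $t_i$ is an abelian square if and only if $P(t_i)\equiv(0,0)\pmod 2$. Because $n$ is assumed to be even, the two coordinates $|t_i|_{\sa{a}}$ and $|t_i|_{\sa{b}}$ of $P(t_i)$ sum to an even number and therefore share the same parity; the condition thus reduces to requiring that $|t_i|_{\sa{a}}$ be even.

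Next I would exploit the fact, recalled just before Proposition \ref{pro:main}, that a length-$n$ factor of $s_\alpha$ has $|t_i|_{\sa{a}}\in\{\lfloor n\alpha\rfloor,\lceil n\alpha\rceil\}$ according to whether it is light or heavy. Since $\alpha$ is irrational we have $n\alpha\notin\Z$, and therefore $\lceil n\alpha\rceil=\lfloor n\alpha\rfloor+1$. Hence the counts of $\sa{a}$ in light and heavy factors have opposite parities: when $\lfloor n\alpha\rfloor$ is even it is the light factors that carry an even number of $\sa{a}$'s, whereas when $\lfloor n\alpha\rfloor$ is odd it is the heavy factors that do so.

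Finally, I would translate "light" versus "heavy" into the geometric condition on $L_i(n)$ by invoking Proposition \ref{pro:main}: $t_i$ is light iff $L_i(n)\subset[0,\{-n\alpha\})$ and $t_i$ is heavy iff $L_i(n)\subset[\{-n\alpha\},1)$. Matching the appropriate class (light or heavy) with the appropriate interval on a case split on the parity of $\lfloor n\alpha\rfloor$ then yields the claimed equivalence. I do not foresee any genuine obstacle: the proof is essentially a composition of the two previous results with one line of parity bookkeeping, and all of the combinatorial content is already encapsulated in Lemma \ref{lem:bal} and Proposition \ref{pro:main}.
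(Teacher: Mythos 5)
Your proof is correct and follows essentially the same route as the paper's: Proposition~\ref{pro:main} for the light/heavy dichotomy, Lemma~\ref{lem:bal} with $k=2$, and the observation that $\lceil n\alpha\rceil=\lfloor n\alpha\rfloor+1$ gives the light and heavy factors opposite parities of $|\cdot|_{\sa{a}}$. One caveat: the conclusion you correctly derive --- the abelian squares are the \emph{light} factors (those with $L_i(n)\subset[0,\{-n\alpha\})$) when $\lfloor n\alpha\rfloor$ is even and the \emph{heavy} ones when it is odd --- is the reverse of the proposition as literally stated, which has the two intervals swapped; this is a typo in the statement rather than an error on your part, since the paper's own proof, the Fibonacci example, and Corollary~\ref{cor:formula} all agree with your version, so you should not claim that your parity bookkeeping "yields the claimed equivalence" verbatim without noting the correction.
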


\begin{proof}
By Proposition \ref{pro:main}, $t_i$ is heavy if $L_{i}(n)\subset [\{-n\alpha\},1)$, while it is light if $L_{i}(n)\subset [0,\{-n\alpha\})$. If $\lfloor n \alpha \rfloor$ is even, then every light factor of length $n$ contains an even number of $\sa{a}$'s and hence is an abelian square, while if $\lfloor n \alpha \rfloor$ is odd, then every heavy factor of length $n$ contains an even number of $\sa{a}$'s and hence is an abelian square, whence the statement follows.
\end{proof}

Recall that given a finite or infinite word $w$,  $\ass{w} {n}$ denotes the number of distinct abelian-square factors of $w$ of length $n$. 

\begin{corollary}\label{cor:formula}
 Let $s_{\alpha}$ be a Sturmian word of angle $\alpha$.
For every positive even $n$, let $I_{n}=\{\{-i \alpha\}\mid 1\le i \le n\}$. Then
\[
\ass{s_{\alpha}} {n} =
\begin{cases}
 \# \{x\in I_{n} \mid x \leq \{-n \alpha\}\}, & \text{ if $\lfloor n \alpha \rfloor$ is even; }\\
 \# \{x\in I_{n} \mid x \geq \{-n \alpha\}\}, & \text{ if $\lfloor n \alpha \rfloor$ is odd. }
\end{cases}
\]
\end{corollary}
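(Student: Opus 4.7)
The plan is to derive Corollary~\ref{cor:formula} as a direct counting consequence of Proposition~\ref{prop:hl} by tallying how many of the $n+1$ intervals $L_0(n),L_1(n),\ldots,L_n(n)$ lie inside each of the two sub-arcs of $[0,1)$ determined by the point $\{-n\alpha\}$. Recall that these intervals partition $[0,1)$ and that their endpoints are exactly the $n+1$ points of $\{0\}\cup I_n$, all distinct because $\alpha$ is irrational; in particular each $L_i(n)$ is a half-open interval $[p,p')$ whose endpoints are consecutive elements of this set.

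The first observation I would use is purely combinatorial: since the $L_i(n)$ are half-open and tile $[0,1)$, the number of them contained in $[0,c)$ equals the number of partition points strictly less than $c$, and the number of them contained in $[c,1)$ equals the number of partition points that are $\geq c$. I would then specialize to $c=\{-n\alpha\}$, which is itself a partition point, being $\{-i\alpha\}$ for $i=n$ and hence an element of $I_n$.

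I would then treat the two parity cases separately. When $\lfloor n\alpha\rfloor$ is even, Proposition~\ref{prop:hl} identifies the abelian-square factors of length $n$ with the intervals $L_i(n)$ contained in the sub-arc lying below $\{-n\alpha\}$; the number of partition points strictly below $\{-n\alpha\}$ equals $1+\#\{x\in I_n:x<\{-n\alpha\}\}$, where the extra $+1$ accounts for the partition point $0$, which does not belong to $I_n$. Since $\{-n\alpha\}$ itself lies in $I_n$, this quantity rewrites as $\#\{x\in I_n:x\leq\{-n\alpha\}\}$, giving the first branch of the formula. When $\lfloor n\alpha\rfloor$ is odd, the abelian squares correspond instead to the intervals lying in the complementary sub-arc $[\{-n\alpha\},1)$, and now the extra partition point $0$ does not enter the count because $0<\{-n\alpha\}$; so the number of intervals equals the number of elements of $I_n$ that are $\geq\{-n\alpha\}$, which is the second branch.

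The substantive work is already carried by Propositions~\ref{pro:main} and~\ref{prop:hl}, which convert the abelian-square condition into a geometric condition on intervals. Consequently I do not anticipate a real obstacle: the only thing to watch carefully is the strict versus non-strict inequalities at the point $\{-n\alpha\}$ and the bookkeeping for the single partition point $0$ that does not appear in $I_n$, both of which are handled by the rewriting shown above.
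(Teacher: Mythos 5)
Your proof is correct and is exactly the counting argument the paper leaves implicit: the corollary is presented as an immediate consequence of Proposition~\ref{prop:hl}, and your bookkeeping of the extra partition point $0$ and of the strict versus non-strict inequality at $\{-n\alpha\}$ (using that $\{-n\alpha\}\in I_n$ and $0\notin I_n$) is precisely what is needed to pass from ``number of intervals $L_i(n)$ in a sub-arc'' to ``number of elements of $I_n$.'' One remark: you attach the even case to the sub-arc $[0,\{-n\alpha\})$ and the odd case to $[\{-n\alpha\},1)$, which is the opposite of the literal statement of Proposition~\ref{prop:hl} but agrees with that proposition's own proof, with the worked Fibonacci examples, and with the corollary being proved --- the two cases in the statement of Proposition~\ref{prop:hl} are evidently transposed, and your reading is the correct one.
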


\begin{example}
 The factors of length $6$ of the Fibonacci word $F$ are, lexicographically ordered:
$\sa{aabaab, aababa, abaaba, ababaa, baabaa}$ (heavy factors), $\sa{baabab, babaab}$ (light factors).
The light factors,  whose number of $\sa{a}$'s is $\lfloor 6  \alpha \rfloor  =3$,  are not abelian squares; the heavy factors,  whose number of $\sa{a}$'s is $\lceil 6 \alpha \rceil=4$,  are all abelian squares.

We have $I_{6}=\{0.382, 0.764, 0.146, 0.528, 0.910, 0.292\}$ (values are approximated) and $6\alpha\simeq 3.708$, so $\lfloor 6\alpha \rfloor$ is odd. Thus, there are $5$ elements in $I_{6}$ that are $\geq \{-6 \alpha\}$, so by Corollary \ref{cor:formula} there are $5$ distinct abelian-square factors of length $6$.

The factors of length $8$ of the Fibonacci word are, lexicographically ordered:
$\sa{aabaabab}$, $\sa{aababaab}$, $\sa{abaabaab}$, $\sa{abaababa}$, $\sa{ababaaba}$, 
$\sa{baabaaba}$, $\sa{baababaa}$, $\sa{babaabaa}$ (heavy factors), $\sa{babaabab}$ (light factor).
The light factor,  whose number of $\sa{a}$'s is $\lfloor 8  \alpha \rfloor  =4$,  is an abelian square; the heavy factors,  whose number of $\sa{a}$'s is $\lceil 8 \alpha \rceil=5$,  are not abelian squares.
We have $I_{8}=\{0.382, 0.764, 0.146, 0.528, 0.910, 0.292, 0.674, 0.056\}$ (values are approximated) and $8\alpha\simeq 4.944$, so $\lfloor 8\alpha \rfloor$ is even. Thus, there is only one element in $I_{8}$ that is $\leq \{8 \alpha\}$, so by Corollary \ref{cor:formula} there is only one abelian-square factor of length $8$.

In Table \ref{tab:i} we report the first few values of the sequence $\ass{F} {n}$
for the Fibonacci word $F$.  More detailed analysis of $\ass{F} {n}$ can
be found, using the decision method we used for Thue-Morse,
in \cite{Du&Mousavi&Schaeffer&Shallit:2014,Du&Mousavi&Schaeffer&Shallit:2016}.
\end{example}

\begin{table}[bt]
\centering  
\begin{small}
\begin{raggedright}
\begin{tabular}{c|c *{30}{@{\hspace{2.5mm}}c}}
$n$\hspace{2mm} &0  & 2  & 4  & 6  & 8 & 10  & 12  & 14  & 16  & 18  & 20 & 22 & 24 & 26 & 28 & 30 & 32 & 34 & 36
\\
\hline \\
$\ass{F} {n}$\hspace{2mm} &0 & 1& 3& 5& 1& 9& 5& 5& 15& 3& 13& 13& 5& 25& 9& 15 & 25 & 1 & 27 \\
\hline \rule[0pt]{0pt}{12pt} 
\end{tabular}
\end{raggedright}\caption{\label{tab:i} The first few values of the sequence $\ass{F} {n}$ of the number of distinct abelian-square factors of length $n$ in the Fibonacci word $F=s_{\phi-1,\phi-1}$. See OEIS sequence A241674.}
\end{small}
\end{table}

Recall that every irrational number $\alpha$ can be uniquely written as a (simple) continued fraction as follows:
\begin{equation}\label{cf}
\alpha = 
a_0+{1\over\displaystyle a_1+
        {\strut 1\over\displaystyle a_2+
	          {\strut 1\over\displaystyle a_3+ 
		  \raisebox{-1ex}{$\ddots$} }}}
\end{equation}
where $a_{0}=\lfloor \alpha \rfloor$, and the infinite sequence $(a_{i})_{i\geq 0}$ is called the sequence of {\it partial quotients\/} of $\alpha$. The continued fraction expansion of $\alpha$ is usually denoted by its sequence of partial quotients as follows: $\alpha=[a_{0};a_{1},a_{2},\ldots ]$, and each of its finite truncations $[a_{0};a_{1},a_{2},\ldots,a_{k}]$ is a rational number $n_{k}/m_{k}$ called the $k$th convergent to $\alpha$. We say that an irrational $\alpha=[a_{0};a_{1},a_{2},\ldots ]$ has bounded partial quotients if and only if the sequence $(a_{i})_{i\geq 0}$ is bounded.  

The continued fraction expansion of $\alpha$ is deeply related to the exponent of the factors of the Sturmian word $s_{\alpha}$. 
The second author \cite{Mi89} proved that a Sturmian word of angle $\alpha$ has bounded exponent if and only if $\alpha$ has bounded partial quotients.

Since the golden ratio $\phi$ is defined by the equation $\phi=1+1/\phi$, we have from Equation \ref{cf} that $\phi=[1;1,1,1,1,\ldots]$ and therefore $\phi-1=[0;1,1,1,1,\ldots]$, so the Fibonacci word is an example of a Sturmian word with bounded exponent (actually, one can prove that the superior limit of the exponent of a factor of the Fibonacci word is $(2+\phi)$ \cite{MignosiPirillo}).

Now we prove that if $\alpha$ has bounded partial quotients, then the Sturmian word  $s_{\alpha}$ is (uniformly) abelian-square-rich. For this, we will use a result on the {\it discrepancy\/} of uniformly distributed modulo $1$ sequences from \cite{KuNi}. To the best of our knowledge, this is the first application of this result to the theory of Sturmian words, and we think that this correspondence might be useful for deriving other results on Sturmian words.

Let $\omega=(x_n)_{n\geq 0}$ be a sequence of real numbers. For a positive integer $N$ and a subset $E$ of the torus $I$, we define $A(E;N;\omega)$ as the number of terms $x_n$, $0\leq n\leq N$, for which $\{x_n\}\in E$. If there is no risk of confusion, we will write $A(E;N)$ instead of $A(E;N;\omega)$.

\begin{definition}
 The sequence $\omega=(x_n)_{n\geq 0}$ of real numbers is said to be 
 {\it uniformly distributed modulo $1$} if and only if for every pair $\gamma,\delta$ of real numbers with $0\leq \gamma<\delta\leq 1$ we have
 \[
 \lim_{N\to \infty} \frac{A([\gamma,\delta);N;\omega)}{N}=\delta-\gamma.
 \]
 \end{definition}
 
 \begin{definition}
 Let $x_0,x_1,\ldots, x_N$ be a finite sequence of real numbers. The number
 \[
 D_N=D_N(x_0,x_1,\ldots, x_N)=\sup_{0\leq \gamma<\delta\leq 1}\left| \frac{A([\gamma,\delta);N)}{N}-(\delta-\gamma)\right|
 \]
 is called the {\it discrepancy} of the given sequence. For an infinite sequence $\omega$ of real numbers the discrepancy $D_{N}(\omega)$ is the discrepancy of the initial segment formed by the first $N+1$ terms of $\omega$.
\end{definition}

The two previous definitions are related by the following result.

\begin{theorem}[\protect{\kern-0.5em}\cite{KuNi}]
 The sequence $\omega$ is uniformly distributed modulo $1$ if and only if $\lim_{N\to \infty}D_N(\omega)=0$.
\end{theorem}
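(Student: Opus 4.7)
The plan is to prove the two directions separately, with the ``if'' direction being essentially a tautology and the ``only if'' direction requiring a finite-partition approximation argument to upgrade pointwise convergence (for each fixed interval) to uniform convergence (over all subintervals of $I$).

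First I would handle the easy direction. If $\lim_{N\to\infty} D_N(\omega)=0$, then for any fixed $\gamma,\delta$ with $0\leq\gamma<\delta\leq 1$, the quantity $|A([\gamma,\delta);N;\omega)/N-(\delta-\gamma)|$ is bounded above by $D_N(\omega)$, hence tends to $0$ as $N\to\infty$, which is exactly the definition of uniform distribution modulo $1$.

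For the ``only if'' direction, suppose $\omega$ is uniformly distributed modulo $1$ and fix $\epsilon>0$. I would choose an integer $k$ with $1/k<\epsilon/2$ and consider the partition $0=\eta_0<\eta_1<\cdots<\eta_k=1$ where $\eta_i=i/k$. Since $\omega$ is uniformly distributed, for each of the finitely many intervals $[\eta_i,\eta_{i+1})$ there exists $N_i$ such that $|A([\eta_i,\eta_{i+1});N;\omega)/N-(\eta_{i+1}-\eta_i)|<\epsilon/(2k)$ for all $N\geq N_i$. Taking $N_0=\max_i N_i$ gives the bound simultaneously for all $i$.

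Now for any $\gamma,\delta$ with $0\leq\gamma<\delta\leq 1$, I would sandwich $[\gamma,\delta)$ between two unions of partition intervals: let $J^-=[\eta_b,\eta_c)$ be the largest such union contained in $[\gamma,\delta)$, and $J^+=[\eta_a,\eta_d)$ the smallest containing it. Both unions consist of at most $k$ of the small intervals, and $|J^+|-|J^-|\leq 2/k<\epsilon$. Summing the $\epsilon/(2k)$ bounds over the at most $k$ subintervals in each union yields $|A(J^\pm;N)/N-|J^\pm||<\epsilon/2$ for $N\geq N_0$. Combining monotonicity of $A(\cdot;N)$ with these two-sided estimates gives $|A([\gamma,\delta);N)/N-(\delta-\gamma)|\leq 3\epsilon/2$ uniformly in $\gamma,\delta$, so $D_N(\omega)\leq 3\epsilon/2$ for $N\geq N_0$, proving $D_N(\omega)\to 0$.

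The only real subtlety is the passage from pointwise to uniform convergence; the main obstacle is precisely that the supremum in the definition of $D_N$ ranges over uncountably many intervals while uniform distribution gives convergence only interval-by-interval. The finite partition resolves this because every interval of $I$ is approximable to within $\epsilon$ in measure by a union of at most $k$ pieces from a common finite collection, and the finite collection enjoys a common $N_0$.
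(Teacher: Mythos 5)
The paper does not prove this statement at all: it is quoted verbatim from Kuipers and Niederreiter \cite{KuNi} as a classical fact, so there is no internal proof to compare against. Your argument is correct and is exactly the standard proof from that reference --- the ``if'' direction is immediate from the definition of the supremum, and the ``only if'' direction uses the finite equipartition into $k$ intervals, a common threshold $N_0$, and the monotone sandwich $A(J^-;N)\leq A([\gamma,\delta);N)\leq A(J^+;N)$ with $|J^+|-|J^-|\leq 2/k$; the only cosmetic point worth noting is that the paper's $A(E;N;\omega)$ counts the $N+1$ terms $x_0,\ldots,x_N$ while dividing by $N$, a normalization discrepancy that vanishes in the limit and does not affect your proof.
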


An important class of  uniformly distributed modulo $1$ sequences is given by the sequence $(n\alpha)_{n\geq 0}$ with $\alpha$ an irrational number. The discrepancy of the sequence $(n\alpha)$ will depend on the finer arithmetical properties of $\alpha$. In particular, we have the following theorem, stating that if $\alpha$ has bounded partial quotients, then its discrepancy has the least order of magnitude possible.

\begin{theorem}[\protect{\kern-0.3em}\cite{KuNi}]
\label{theor:KN2}
 Suppose the irrational $\alpha=[a_0;a_1,\ldots]$ has  partial quotients bounded by $K$. Then the discrepancy $D_N(\omega)$ of $\omega=(n\alpha)$ satisfies $ND_N(\omega)=O(\log N)$. More exactly, we have
\begin{equation}
 ND_N(\omega)\leq 3+\left(\frac{1}{\log \phi}+\frac{K}{\log(K+1)}\right)\log N
\end{equation}
where $\phi$ is the golden ratio.
\end{theorem}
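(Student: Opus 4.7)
The plan is to derive the discrepancy bound by combining the Erdős--Turán--Koksma inequality with sharp estimates coming from the continued fraction expansion of $\alpha$. First, I would invoke the inequality in the form
\[
N D_N(\omega) \leq \frac{c_0 N}{m+1} + c_1 \sum_{h=1}^{m} \frac{1}{h} \left| \sum_{n=0}^{N-1} e^{2\pi i h n \alpha} \right|
\]
for any positive integer $m$, with absolute constants $c_0,c_1$; and then bound the inner geometric sum by
\[
\left| \sum_{n=0}^{N-1} e^{2\pi i h n \alpha} \right| \leq \min\!\left(N, \frac{1}{2\|h\alpha\|}\right),
\]
where $\|x\|$ denotes the distance from $x$ to the nearest integer. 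This reduces the problem to estimating the sum $\sum_{h\leq m} 1/(h\|h\alpha\|)$.

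Next, I would exploit the convergents $p_k/q_k$ of $\alpha$ to control $\|h\alpha\|$ from below. The classical best-approximation property yields $\|h\alpha\| \geq \|q_{k-1}\alpha\| \geq 1/(q_k + q_{k-1})$ whenever $q_{k-1}\leq h < q_k$. The hypothesis $a_i \leq K$ enters in two complementary ways: the recurrence $q_{k+1} = a_{k+1}q_k + q_{k-1} \leq (K+1)q_k$ bounds how slowly the $q_k$ can grow, producing the factor $K/\log(K+1)$; meanwhile the unconditional inequality $q_{k+1}\geq q_k + q_{k-1}$ gives $q_k \geq \phi^{k-1}$, so the number of convergent levels below $N$ is at most $\log N/\log\phi + O(1)$, producing the factor $1/\log\phi$. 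Grouping the indices $1\le h\le m$ by the interval $[q_{k-1},q_k)$ in which they lie, the contributions telescope to a total of order $\log N$ with the advertised leading coefficient $1/\log\phi + K/\log(K+1)$.

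Finally, I would choose $m$ roughly of order $N$ and collect the constant error terms from Erdős--Turán--Koksma, the approximation $\|q_k\alpha\|\leq 1/q_{k+1}$, and the last (possibly incomplete) level; these accumulate to the additive $3$ in the stated inequality.

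The main obstacle is the sharp bookkeeping in the middle step: to obtain precisely the constants $1/\log\phi$ and $K/\log(K+1)$, rather than merely $O_K(\log N)$, one has to decompose $\{1,\ldots,m\}$ via the Ostrowski numeration associated with $\alpha$, whose ``digits'' at level $k$ are bounded by $a_{k+1}\leq K$. This numeration turns $\sum_h 1/(h\|h\alpha\|)$ into a clean weighted sum over Ostrowski digits, in which the extremal contribution of a maximal ``all-$K$'' string is responsible for the $K/\log(K+1)$ term and the minimal growth $q_k\geq F_{k+1}$ is responsible for the $1/\log\phi$ term. Once this decomposition is in place, the remaining estimate is a direct, though somewhat delicate, computation.
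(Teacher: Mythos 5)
This theorem is quoted verbatim from Kuipers and Niederreiter \cite{KuNi}; the paper gives no proof of its own, so your proposal can only be compared with the standard argument in that reference. That argument does not use the Erd\H{o}s--Tur\'an--Koksma inequality at all: it applies the Ostrowski expansion to the \emph{number of points}, writing $N=\sum_i c_i q_i$ with digits $c_i\le a_{i+1}\le K$, splits the orbit $(\{n\alpha\})_{0\le n<N}$ into $\sum_i c_i$ blocks, each a translate of $\{0,\alpha,\ldots,(q_i-1)\alpha\}$ and hence of discrepancy $O(1/q_i)$ by the best-approximation property, and concludes that $ND_N$ is bounded (up to small corrections) by $\sum_i(c_i+1)$. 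The count of levels, $m+1\le 1+\log N/\log\phi$ coming from $q_i\ge\phi^{i-1}$, produces the $1/\log\phi$ coefficient, while the elementary inequality $c_i\le a_{i+1}\le \frac{K}{\log(K+1)}\log(a_{i+1}+1)$ combined with $\sum_i\log(a_{i+1}+1)\lesssim\log N$ produces the $K/\log(K+1)$ coefficient.

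Your route has a genuine quantitative gap. After Erd\H{o}s--Tur\'an and the bound $|S_N(h)|\le\min\bigl(N,\tfrac{1}{2\|h\alpha\|}\bigr)$, you must control $\sum_{h\le m}\frac{1}{h\|h\alpha\|}$, and for badly approximable $\alpha$ this sum has order $(\log m)^2$, not $\log m$: since $\sum_{h\le t}\|h\alpha\|^{-1}\asymp t\log t$ (within a range $h\le t$ the values $\|h\alpha\|$ are $\asymp t$ points spaced $\asymp 1/t$ apart), partial summation gives $\sum_{h\le m}\frac{1}{h\|h\alpha\|}\asymp\int_1^m\frac{\log t}{t}\,dt\asymp(\log m)^2$. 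The blocks therefore do not ``telescope to order $\log N$''; each dyadic (or convergent) block contributes $\asymp\log Q$, and there are $\asymp\log m$ blocks. (Your cruder plan of using $\|h\alpha\|\ge 1/(q_k+q_{k-1})$ for \emph{every} $h\in[q_{k-1},q_k)$ is worse still, giving a contribution of size $\asymp q_k\log(a_k+1)$ from a single block.) Consequently the Erd\H{o}s--Tur\'an approach yields only $ND_N=O((\log N)^2)$ here --- a known limitation of that method --- and no Ostrowski bookkeeping on the frequency side repairs it, because the frequency sum itself is genuinely of size $(\log N)^2$. Even setting this aside, the absolute constants in Erd\H{o}s--Tur\'an (e.g.\ the factor $4/\pi$ and the $O(N/m)$ term) would prevent you from recovering the exact coefficient $\frac{1}{\log\phi}+\frac{K}{\log(K+1)}$ and the additive constant $3$. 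To prove the inequality as stated you must decompose the point set itself via the Ostrowski representation of $N$, as sketched above.
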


As a consequence, we obtain the following corollary:

\begin{corollary}\label{cor:disc}
  Suppose the irrational $\alpha=[a_0;a_1,\ldots]$ has bounded partial quotients. Then there exists a positive constant $C$ such that for every pair $\gamma,\delta$ of real numbers with $0\leq \gamma<\delta\leq 1$ and for every $n$, we have
 \[
A([\gamma,\delta);n;(n\alpha)) \geq  n(\delta-\gamma) - C\log n.
 \]

\end{corollary}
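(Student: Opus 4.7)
The plan is to derive the bound as an almost immediate consequence of Theorem~\ref{theor:KN2} combined with the definition of discrepancy. First, I would unwind the definition of $D_N$: for any pair $0\le\gamma<\delta\le 1$, by taking the specific interval $[\gamma,\delta)$ in the supremum defining $D_N(\omega)$, we have
\[
\left| \frac{A([\gamma,\delta);N;\omega)}{N} - (\delta-\gamma) \right| \le D_N(\omega),
\]
which in particular gives the one-sided bound
\[
A([\gamma,\delta);N;\omega) \ge N(\delta-\gamma) - N D_N(\omega).
\]

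Next, since $\alpha$ has partial quotients bounded by some $K$, Theorem~\ref{theor:KN2} applied to $\omega=(n\alpha)$ yields
\[
N D_N(\omega) \le 3 + \left(\frac{1}{\log\phi} + \frac{K}{\log(K+1)}\right)\log N.
\]
Combining the two displays gives
\[
A([\gamma,\delta);N;(n\alpha)) \ge N(\delta-\gamma) - 3 - \left(\frac{1}{\log\phi} + \frac{K}{\log(K+1)}\right)\log N
\]
for every $N\ge 1$ and every $\gamma<\delta$ in $[0,1]$.

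Finally, I would absorb the additive constant $3$ into the logarithmic term at the cost of slightly enlarging the coefficient: for $N\ge 2$, choosing any constant $C$ with $C \ge \frac{1}{\log\phi} + \frac{K}{\log(K+1)} + \frac{3}{\log 2}$ makes
\[
3 + \left(\frac{1}{\log\phi} + \frac{K}{\log(K+1)}\right)\log N \le C\log N,
\]
and the trivial case $N=1$ is handled by enlarging $C$ further if necessary (the left-hand side $A([\gamma,\delta);1;(n\alpha))$ is non-negative, while the right-hand side can be made non-positive for $N=1$ by making $C$ large). There is no substantive obstacle here; the entire content of the corollary is the translation of the explicit discrepancy estimate of Theorem~\ref{theor:KN2} into a one-sided counting inequality for arithmetic progressions modulo~$1$, and the only thing to be careful about is the choice of the constant $C$ depending on $K$.
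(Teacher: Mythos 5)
Your derivation is correct and coincides with the paper's, which in fact states this corollary without any written proof, treating it as the immediate consequence of Theorem~\ref{theor:KN2} and the definition of discrepancy that you spell out. One small quibble: your fix for $N=1$ does not work as stated, since $\log 1=0$ means enlarging $C$ has no effect there (the inequality can genuinely fail at $n=1$), but this is an edge-case artifact of the statement itself and is irrelevant to how the corollary is used.
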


Now we use the results above to prove that Sturmian words with bounded exponent are abelian-square-rich.

\begin{theorem}\label{theor:lin}
 Let $s_{\alpha}$ be a Sturmian word of angle $\alpha$ such that $\alpha$ has bounded partial quotients. Then there exists a positive constant $C$ such that for every $n$ one has $\sum_{m\leq n} \ass{s_{\alpha}} {m}\geq Cn^2$.
\end{theorem}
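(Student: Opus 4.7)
The plan is to combine Corollary~\ref{cor:formula} with the discrepancy estimate of Corollary~\ref{cor:disc}, applied twice: once to lower-bound $\ass{s_{\alpha}}{m}$ itself, and once to show that the resulting lower bound is bounded away from zero for a positive density of $m$.

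First I would look only at even lengths $m = 2k$, since odd $m$ contribute nothing, and expand Corollary~\ref{cor:formula} in terms of $\{k\alpha\}$. Since $\lfloor 2k\alpha\rfloor = 2\lfloor k\alpha\rfloor + \lfloor 2\{k\alpha\}\rfloor$, its parity is that of $\lfloor 2\{k\alpha\}\rfloor$, so the two cases of the corollary correspond exactly to $\{k\alpha\} < 1/2$ (``even'' case) and $\{k\alpha\} \geq 1/2$ (``odd'' case). In the first case one has $\{-2k\alpha\} = 1 - 2\{k\alpha\}$ and $\ass{s_{\alpha}}{2k}$ counts the elements of $I_{2k}$ in the interval $[0, 1-2\{k\alpha\}]$; in the second case $\{-2k\alpha\} = 2 - 2\{k\alpha\}$ and $\ass{s_{\alpha}}{2k}$ counts the elements of $I_{2k}$ in $[2-2\{k\alpha\}, 1)$, which has length $2\{k\alpha\}-1$. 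Applying Corollary~\ref{cor:disc} to the sequence $(-i\alpha)$ (which has bounded partial quotients since $\alpha$ does) yields, uniformly in both branches,
\[
\ass{s_{\alpha}}{2k} \geq 2k\cdot |2\{k\alpha\}-1| - C_1\log(2k),
\]
where $C_1$ depends only on the bound on the partial quotients of $\alpha$.

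Next I would apply Corollary~\ref{cor:disc} a second time, now to the sequence $(i\alpha)$ and to the set $E = [0,1/4) \cup [3/4,1)$ of measure $1/2$, to deduce that the number of $k \in (N/2,N]$ with $\{k\alpha\} \in E$ — equivalently $|2\{k\alpha\}-1| \geq 1/2$ — is at least $N/4 - C_2\log N$. Taking $N = \lfloor n/2\rfloor$ and restricting $\sum_{m \leq n}\ass{s_{\alpha}}{m}$ to those indices $m = 2k$, each surviving $k$ satisfies $k > N/2$ and by the previous display contributes at least $k - C_1\log(2k) \geq N/2 - O(\log n)$. Multiplying the number of surviving terms by this per-term lower bound produces a total of order $N^2$, hence of order $n^2$; the finitely many small $n$ for which the logarithmic slack dominates are absorbed by shrinking the final constant $C$.

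The main obstacle I expect is the parity bookkeeping in the first step: one must verify cleanly that the asymmetric cases of Corollary~\ref{cor:formula} fuse into the single symmetric estimate in $|2\{k\alpha\}-1|$, and be slightly careful about the boundary point $\{-2k\alpha\}$ itself when translating Corollary~\ref{cor:disc} (whose intervals are half-open) into the counts arising in Corollary~\ref{cor:formula}. Once that is done, the rest is standard low-discrepancy arithmetic; the specific choice of $E$ and of the sub-range $(N/2,N]$ is not essential, only that $E$ avoid a neighborhood of $1/2$ and that the surviving $k$ be of order $N$, so that each contributes $\Omega(N)$ and there are $\Omega(N)$ of them.
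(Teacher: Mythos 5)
Your proposal is correct and follows essentially the same route as the paper: both convert Corollary~\ref{cor:formula} into a count of orbit points lying in an interval of length $|2\{k\alpha\}-1|$ for $m=2k$ (the paper's pivot quantity $\{m\alpha/2\}$ coincides with your $\{k\alpha\}$ for even $m$), and then apply the bounded-partial-quotient discrepancy estimate twice --- once to bound each term from below and once to show that a positive proportion of the $k\in(N/2,N]$ keep that interval of length at least $1/2$. The only substantive differences are presentational: you retain both parity branches symmetrically where the paper discards the ``even'' one, and you apply the discrepancy bound to the sequences $(\pm i\alpha)$ rather than to $(i\alpha/2)$, thereby avoiding the cited fact that $\alpha/2$ also has bounded partial quotients.
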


\begin{proof}
For every even $n$, let $I'_{n}=\{\{i \alpha\}\mid 1\le i \le n\}$. By Corollary \ref{cor:formula} and basic arithmetical properties of the fractional part, we have
\[
\ass{s_{\alpha}} {n} = 
\begin{cases}
 \# \{x\in I'_{n} \mid x \geq \{n \alpha\}\}, & \text{ if $\lfloor n \alpha \rfloor$ is even; }\\
 \# \{x\in I'_{n} \mid x \leq \{n \alpha\}\}, & \text{ if $\lfloor n \alpha \rfloor$ is odd. }
\end{cases}
\]
So,
\begin{eqnarray}
&& \sum_{m\leq n}\ass{s_{\alpha}} {m}   \label{a}\\
&& \geq  \sum_{m\leq n,\ \lfloor m\alpha\rfloor \text{ odd }} \# \{ \{i \alpha\}\mid\{i \alpha\}\leq \{m \alpha\}, 1\leq i\leq m\} \label{b}\\
&& \geq \sum_{m\leq n,\ \lfloor m\alpha\rfloor \text{ odd},\ \{m \alpha\}\geq 1/2} \# \{ \{i \alpha\}\mid\{i \alpha\}\leq 1/2, 1\leq i\leq m\} \label{c}\\
&& = \sum_{m\leq n,\ \{m \alpha/2\}\geq 1/2,\ \{m \alpha\}\geq 1/2} \# \{ \{i \alpha\}\mid\{i \alpha\}\leq 1/2, 1\leq i\leq m\} \label{d}\\
&& \geq \sum_{m\leq n,\ \{m \alpha/2\}\geq 3/4} \# \{ \{i \alpha\}\mid\{i \alpha\}\leq 1/2, 1\leq i\leq m\} \label{e}\\
&& \geq \sum_{m\leq n,\ \{m \alpha/2\}\geq 3/4} \# \{ \{i \alpha/2\}\mid\{i \alpha/2\}\leq 1/4, 1\leq i\leq m\}   \label{f} \\
 && \geq \hspace{-3mm} \sum_{n/2\leq m\leq n,\ \{m \alpha/2\}\geq 3/4} \# \{ \{i \alpha/2\}\mid\{i \alpha/2\}\leq 1/4, 1\leq i\leq n/2 \} \label{g} \\
\label{1}
&& = \# \{ \{i \alpha/2\}\mid\{i \alpha/2\}<1/4, 1\leq i\leq n/2\}\cdot \hspace{-4mm} \sum_{n/2\leq m\leq n}  \{ m \mid \{m \alpha/2\}\geq 3/4 \}, \label{h}
\end{eqnarray}
where: 
\eqref{d} follows from \eqref{c} because for every integer $m$ one has that $\lfloor m\alpha\rfloor$ is odd if and only if $\{m \alpha/2\}\geq 1/2$; 
\eqref{e} follows from \eqref{d}  because for every integer $m$,
$\{m \alpha/2\}\geq 3/4$ implies $\{m \alpha\}\geq 1/2$;  
\eqref{f} follows from \eqref{e}  because for every integer $i$,
$\{i \alpha/2\}\leq 1/4$ implies $\{i \alpha\}\leq 1/2$;  
finally \eqref{h} follows from \eqref{g} because the cardinality of the first set is independent from the sum. The other (in-)equalities are obvious.

Since $\alpha$ has bounded partial quotients, so does $\alpha/2$ (see, e.g., \cite{LS97}), and we can apply Corollary \ref{cor:disc} to evaluate the two factors of \eqref{h}. Therefore, we have
\begin{eqnarray*}
&& \# \{ \{i \alpha/2\}\mid\{i \alpha/2\}<1/4, 1\leq i\leq n/2\} \\ 
&& = A([0,1/4);n/2;(n\alpha/2)) \\\
&& \geq (1/4)(n/2)-C_1\log n \\
&& = n/8-C_1\log n,
\end{eqnarray*}
for a positive constant $C_1$, and
\begin{eqnarray*}
&& \hspace{-3mm} \sum_{n/2\leq m\leq n} \{ m \mid \{m \alpha/2\}\geq 3/4 \} \\
&& =A([3/4,1);n;(n\alpha/2))-A([3/4,1);n/2;(n\alpha/2)) \\
&&  \geq (1/4)n-C_2\log n - (1/4)(n/2) +C_3\log n \\
&& = n/8 - C_4 \log n,
\end{eqnarray*}
for  constants $C_2,C_3,C_4$. From this, we can conclude that the product of the two factors of \eqref{h} is greater than a constant times $n^2$, as required.
\end{proof}

Next, we recall a result of J.~Cassaigne:

\begin{lemma}\cite[Prop.~5]{Ca99}
 The recurrence quotient $r_{\alpha}$ of a Sturmian word of angle $\alpha=[0;a_1,a_2,\ldots]$ is finite if and only if $\alpha$ has bounded partial quotients and, in this latter case, one has $2+\limsup a_i < r_{\alpha} <3+\limsup a_i$.
\end{lemma}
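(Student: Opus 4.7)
The plan is to translate the statement about the recurrence quotient into a statement about covering times for the rotation $R_\alpha : x\mapsto \{x+\alpha\}$ on the torus $I$, and then to read off the asymptotics from the Three Distance Theorem and the continued fraction expansion of $\alpha$. The geometric setup in Section 5 shows that the factors of $s_\alpha$ of length $n$ are in bijection with the intervals $L_0(n),\ldots,L_n(n)$ of the partition of $I$ by the points $\{-i\alpha\}$, $0\le i \le n$, and that the factor of length $m$ occurring at position $j$ contains every factor of length $n$ if and only if the orbit $\{\rho+k\alpha\}$ for $j\le k\le j+m-n$ hits every $L_i(n)$. Hence $R_\alpha(n)=n+T(n)$, where $T(n)$ is the worst-case covering time of the partition $\{L_i(n)\}_{i=0}^{n}$ under $R_\alpha$.

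The first step is to apply the Three Distance Theorem to the partition: if $q_k\le n+1<q_{k+1}$ are consecutive convergent denominators of $\alpha=[0;a_1,a_2,\ldots]$, then the $L_i(n)$ take at most three distinct lengths, each of the form $\|q_j\alpha\|$ for $j\in\{k-1,k\}$, and the smallest length lies between $1/(q_k+q_{k+1})$ and $1/q_{k+1}$. The second step is to bound $T(n)$ in terms of these lengths: a standard first-return argument for a circle rotation (or Corollary \ref{cor:disc} in the bounded-partial-quotient case) shows that the covering time of a partition whose smallest cell has length $\ell$ is of order $1/\ell$, and more precisely one obtains $T(n)=q_k+q_{k+1}+O(q_k)$. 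Substituting yields
\[
\frac{R_\alpha(n)}{n}\;=\;1+\frac{q_k+q_{k+1}+O(q_k)}{n}.
\]

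The third step is the extraction of the limsup. Using $q_{k+1}=a_{k+1}q_k+q_{k-1}$, evaluating at $n+1=q_k$ gives $R_\alpha(n)/n\to a_{k+1}+2+q_{k-1}/q_k$ up to lower-order terms; since $q_{k-1}/q_k\in(0,1)$ (strictly, by irrationality of $\alpha$), taking $\limsup$ over $k$ produces a value strictly between $2+\limsup a_i$ and $3+\limsup a_i$. If instead $\limsup a_i=\infty$, the same formula shows $R_\alpha(n)/n$ is unbounded along $n+1=q_k$, so $r_\alpha=+\infty$.

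The main obstacle will be turning the approximate asymptotic formula $T(n)\approx q_k+q_{k+1}$ into inequalities sharp enough to give \emph{strict} bounds $2+\limsup a_i < r_\alpha < 3+\limsup a_i$. This requires not only matching upper and lower bounds on the covering time, but also identifying precisely which values of $n$ (possibly $n=q_k-1$, $n=q_k+q_{k-1}$, or intermediate values of the form $n+1=rq_k+q_{k-1}$ with $1\le r\le a_{k+1}$) realize the worst ratio, and then a careful case analysis of the residual term $q_{k-1}/q_k$, which must be shown to stay strictly bounded away from both $0$ and $1$ along any subsequence achieving the limsup. Once this delicate accounting is done, the equivalence with boundedness of partial quotients follows immediately from Mignosi's characterization of bounded-exponent Sturmian words combined with the discrepancy estimates already invoked.
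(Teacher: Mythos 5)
First, a point of reference: the paper does not prove this lemma at all --- it is imported verbatim from Cassaigne \cite{Ca99} --- so your proposal can only be measured against Cassaigne's own argument, which rests on the exact Morse--Hedlund formula for the recurrence function of a Sturmian word, $R(n)=n-1+q_k+q_{k+1}$ on the range of $n$ determined by consecutive denominators $q_k$ of the convergents of $\alpha$. Your overall route --- identify the length-$n$ factors with the intervals $L_i(n)$, reduce $R(n)$ to a covering time for the rotation, control that covering time via the three-distance theorem, then optimize $R(n)/n$ at $n\approx q_k$ --- is exactly this standard one and, carried out in full, does prove the lemma.

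As written, however, the proposal has a genuine gap precisely where the content of the lemma lies. The covering-time estimate $T(n)=q_k+q_{k+1}+O(q_k)$ is too weak: at the extremal scale $n+1=q_k$ an additive error of order $q_k$ in $T(n)$ perturbs $R(n)/n$ by a constant, which is fatal when the goal is to pin $r_\alpha$ inside an interval of width $1$ with \emph{strict} endpoints. One needs the exact identity $T(n)=q_k+q_{k+1}$ (equivalently the Morse--Hedlund formula); this does follow from the three-distance theorem, but only via the precise description of which gap lengths occur and of the first-return times to each gap, so the ``standard first-return argument'' must be carried out as an identity, not as an order-of-magnitude bound. The appeal to Corollary \ref{cor:disc} cannot substitute for this: a discrepancy bound of $O(\log N)$ controls how many orbit points fall in a fixed interval, not the exact first time at which every cell of the partition has been visited. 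You flag this as ``the main obstacle'' but do not resolve it, and the same applies to the residual term: knowing pointwise that $q_{k-1}/q_k\in(0,1)$ does not give strict inequalities after passing to a limsup; one needs the uniform bounds $1/(K+1)\le q_{k-1}/q_k\le (K+1)/(K+2)$, valid when the partial quotients are bounded by $K$, along the subsequence realizing $\limsup a_i$. Finally, the closing sentence is a non sequitur: Mignosi's theorem characterizes bounded \emph{exponent}, not linear recurrence, and deducing ``$r_\alpha$ finite iff the $a_i$ are bounded'' from it would presuppose exactly the equivalence under proof; that equivalence instead falls out of the same exact formula, since $R(q_k-1)/(q_k-1)\ge a_{k+1}+2+o(1)$ shows unbounded partial quotients force $r_\alpha=+\infty$, while the matching upper bound handles the converse.
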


Thus, we have the following:

\begin{corollary}
 Let $s_{\alpha}$ be a Sturmian word of angle $\alpha$. Then  $s_{\alpha}$ has bounded exponent if and only if $s_{\alpha}$ is uniformly abelian-square-rich.
\end{corollary}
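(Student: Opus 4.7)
The plan is to combine the results already assembled in the paper: the characterization of bounded exponent in Sturmian words via continued fractions, the linearly recurrent nature of such Sturmian words (given by Cassaigne's lemma), the quadratic lower bound on cumulative abelian-square counts (Theorem \ref{theor:lin}), and the general principles linking these notions (Lemmas \ref{lem:un}, \ref{lem:lin} and Proposition \ref{prop:new}).

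For the forward implication, I would assume that $s_{\alpha}$ has bounded exponent. By Mignosi's theorem (cited just before Equation \eqref{cf}), this is equivalent to the partial quotients of $\alpha$ being bounded. Cassaigne's lemma then ensures that the recurrence quotient $r_{\alpha}$ is finite, so $s_{\alpha}$ is linearly recurrent. Theorem \ref{theor:lin} applies and yields a positive constant $C$ with $\sum_{m \leq n} \ass{s_{\alpha}}{m} \geq C n^{2}$ for every $n$. An application of Lemma \ref{lem:lin} then gives that $s_{\alpha}$ is uniformly abelian-square-rich.

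For the reverse implication, I would invoke Proposition \ref{prop:new}, which states that any uniformly abelian-square-rich word has bounded exponent. Applied to $w = s_{\alpha}$, this closes the equivalence without further work.

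There is essentially no obstacle here: the corollary is a bookkeeping synthesis of previously established results. The only mild care needed is to make sure to cite Mignosi's characterization (bounded exponent $\Leftrightarrow$ bounded partial quotients) explicitly to pass from the hypothesis phrased in terms of exponents to the hypothesis phrased in terms of partial quotients required by Theorem \ref{theor:lin} and Cassaigne's lemma.
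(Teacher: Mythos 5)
Your proposal is correct and follows essentially the same route as the paper: bounded exponent is translated into bounded partial quotients via Mignosi's characterization, linear recurrence comes from Cassaigne's lemma, Theorem \ref{theor:lin} plus Lemma \ref{lem:lin} give uniform abelian-square-richness, and Proposition \ref{prop:new} handles the converse. The paper additionally cites Lemma \ref{lem:un}, but as you observed Lemma \ref{lem:lin} already yields the uniform conclusion, so your slightly leaner citation list is fine.
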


\begin{proof}
We know that $s_{\alpha}$  has bounded exponent if and only if $\alpha$ has bounded partial quotients, if and only if $s_{\alpha}$ is linearly recurrent. The statement then follows from Theorem \ref{theor:lin}, Lemmas \ref{lem:lin} and \ref{lem:un} and Proposition \ref{prop:new}.
\end{proof}

\section{Conclusions}

We proved that the Thue-Morse word is uniformly abelian-square-rich. We suspect that the technique we used for the proof can be generalized to some extent, and could be used, for example,  to prove that a subclass of fixed points of uniform substitutions are uniformly abelian-square-rich.

We also proved that Sturmian words of bounded exponent are uniformly abelian-square-rich (and the converse also holds). The proof we gave is based on a classical result on the discrepancy of the uniformly distributed modulo $1$ sequence $(n\alpha)_{n\geq 0}$, where $\alpha$ is the angle of the Sturmian word. To the best of our knowledge, this is the first application of this result to the theory of Sturmian words, and we believe that the correspondence we have shown might be useful for deriving other results on Sturmian words.

We leave open the question of determining whether $s_{\alpha}$ is not abelian-square-rich in the case when $\alpha$ has unbounded partial quotients. Notice that, by Proposition \ref{prop:new}, such $s_{\alpha}$ cannot be uniformly abelian-square-rich, since a Sturmian word has bounded partial quotients if and only if it has bounded exponent.

We mostly investigated binary words in this paper. We conjecture that binary words have the largest number of distinct abelian-square factors. More precisely, we propose the following conjecture.

\begin{conjecture}
 If a word of length $n$ contains $k$ distinct abelian-square factors, then there exists a binary word of length $n$ containing at least $k$ distinct abelian-square factors.
\end{conjecture}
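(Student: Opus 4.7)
The plan is to attack the conjecture by letter-to-letter projection. Given a word $w$ of length $n$ over an alphabet $\Sigma$ that contains $k$ distinct abelian-square factors, we consider maps $f \colon \Sigma \to \{\sa{a}, \sa{b}\}$ extended to a length-preserving morphism $\Sigma^* \to \{\sa{a}, \sa{b}\}^*$. Since $f(w)$ is a binary word of length $n$, it suffices to exhibit one such $f$ for which $f(w)$ contains at least $k$ distinct abelian-square factors.

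The first step is a Parikh-vector observation: $f$ maps abelian squares to abelian squares. Indeed, if $u = v_1 v_2$ with $P(v_1) = P(v_2)$, then for each letter $c \in f^{-1}(\sa{a})$ one has $|v_1|_c = |v_2|_c$, hence $|f(v_1)|_{\sa{a}} = \sum_{c \in f^{-1}(\sa{a})} |v_1|_c = |f(v_2)|_{\sa{a}}$, and analogously for $\sa{b}$. Writing $S$ for the set of abelian-square factors of $w$, the image $\{f(u) : u \in S\}$ is then a set of abelian-square factors of $f(w)$, so the number of distinct abelian-square factors of $f(w)$ is bounded below by $|f(S)|$, plus any new abelian squares that happen to appear in $f(w)$.

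The natural next step is a probabilistic/averaging argument: choose $f$ uniformly at random among the $2^{|\Sigma|}$ possible maps, and estimate the expected number of distinct abelian-square factors of $f(w)$. A direct calculation gives, for $u_1 \neq u_2 \in S$ with $|u_1| = |u_2|$, $\Pr[f(u_1) = f(u_2)] = 2^{m-t}$, where $t$ is the number of distinct letters occurring in $u_1$ or $u_2$ and $m$ is the number of connected components of the graph on those letters whose edges are $\{u_1[i], u_2[i]\}$ over positions $i$ where $u_1[i] \neq u_2[i]$; this probability is at most $1/2$ and is often much smaller when $u_1$ and $u_2$ differ substantially.

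The main obstacle is that a naive union bound over all $\binom{k}{2}$ pairs is far too weak when $k$ is large, and indeed injectivity of $f$ on $S$ cannot always be achieved: if $S$ contains abelian squares $cc$ for many distinct letters $c$, no binary projection can distinguish them all. The real content of the proof must therefore be to count the new abelian squares that $f(w)$ acquires and show that, in expectation, they more than compensate for any collisions. One would argue that a random binary word of length $n$ derived from a letter partition of $\Sigma$ typically inherits many balanced factors from the composition structure of $w$; combining the amortized count of surviving images of $S$ with the expected count of novel abelian squares of $f(w)$ should yield a total of at least $k$. Making this accounting precise—especially at short lengths, where the combinatorial room is tightest—appears to be the hardest step, and may require using structural properties of $w$ (for instance, which Parikh classes are abundant among its factors) rather than a purely worst-case argument.
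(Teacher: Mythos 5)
This statement is not proved in the paper: it is stated as an open conjecture in the Conclusions, so there is no proof of the authors' to compare yours against. Your proposal, by your own admission, is also not a proof, and the gap is exactly where you locate it. The sound part is the first step: a letter-to-letter morphism $f:\Sigma\to\{\sa{a},\sa{b}\}$ sends every abelian square to an abelian square, since Parikh-vector equality is preserved under summing coordinates over $f^{-1}(\sa{a})$ and $f^{-1}(\sa{b})$. Everything after that is a plan rather than an argument. The collisions are not a technicality one can union-bound away: as your own example $\{cc : c\in\Sigma\}$ shows, $|f(S)|$ can be forced far below $k$ for \emph{every} binary projection, so the conjecture, if approached this way, rests entirely on the "compensation" claim that $f(w)$ acquires enough novel abelian squares to cover the deficit. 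You give no mechanism coupling the number of collisions to the number of new abelian squares created, no lower bound on the latter, and no reason the expectation over random $f$ should reach $k$; the heuristic that $f(w)$ "typically inherits many balanced factors" is not quantified. Since this compensation step is precisely the content of the conjecture, the proposal reduces the problem to an equally hard (arguably harder, because over-constrained) one.

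A further structural point: the conjecture only asks for the \emph{existence} of some binary word of length $n$ with at least $k$ distinct abelian-square factors; it does not require that word to be a homomorphic image of $w$. By committing to projections you are proving a strictly stronger statement, and it is not clear that the stronger statement is even true. A viable attack might instead compare $k$ against the known extremal binary constructions --- for instance, the words $a^m b a^m b a^m$ of Proposition~\ref{prop:as}, which already realize $\Theta(n^2)$ distinct abelian squares --- and try to show that no word over a larger alphabet can beat the binary maximum; but that, again, is the open problem itself.
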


A slightly different point of view from the one we considered in this paper consists in identifying two abelian squares if they have the same Parikh vector. Two abelian squares are therefore called \emph{inequivalent} if they have different Parikh vectors \cite{FSP97}. Sturmian words only have a linear number of inequivalent abelian squares. Nevertheless, a word of length $n$ can contain $\Omega(n\sqrt{n})$ inequivalent abelian squares \cite{warsaw2}. Computations support the following conjecture:

\begin{conjecture}[see \cite{Ry14,warsaw2}]
 Every word of length $n$ contains at most $\Theta(n\sqrt{n})$ inequivalent abelian squares.
\end{conjecture}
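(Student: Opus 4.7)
The natural starting point is to reformulate the count. For a word $w$ of length $n$ over an alphabet $\Sigma$, let $P_m(i)$ denote the Parikh vector of the factor $w[i..i+m-1]$, and let $\rho_w^{\mathrm{ab}}(m)$ be the abelian complexity. A length-$2m$ factor $w[i..i+2m-1]$ is an abelian square if and only if $P_m(i)=P_m(i+m)$, and its Parikh-equivalence class is determined by $P_m(i)$. Hence, writing $E_m=\{i:P_m(i)=P_m(i+m),\ 0\le i\le n-2m\}$, the number of inequivalent abelian squares of length $2m$ equals $|\{P_m(i):i\in E_m\}|$, and the conjecture becomes
\[
\sum_{m=1}^{n/2}\bigl|\{P_m(i):i\in E_m\}\bigr|=O(n^{3/2}).
\]
In the binary case this reduces to a statement about the partial sum $a_i:=|w[0..i-1]|_{\sa{a}}$: the number of distinct pairs $(m,d)$ such that $(a_i,a_{i+m},a_{i+2m})$ is a three-term arithmetic progression with common difference $d/2$ for some $i$ is $O(n^{3/2})$.

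A tempting first reduction is to upper bound each term by $\rho_w^{\mathrm{ab}}(m)$. This is however too lossy: for $w=\sa{a}^{n/2}\sa{b}^{n/2}$ one has $\rho_w^{\mathrm{ab}}(m)=m+1$ summing to $\Theta(n^2)$, while the true number of inequivalent abelian squares in this word is only $\Theta(n)$ (the only abelian-square factors are $\sa{a}^{2m}$ and $\sa{b}^{2m}$). Hence the real input to the argument must be the joint constraint $P_m(i)=P_m(i+m)$, not the abelian complexity alone. I would split the count at the threshold $m_0=\lfloor\sqrt n\rfloor$: for $m\le m_0$ the trivial bound $\min(m+1,n-2m+1)$ contributes at most $O(n)$, so the work is to show an average bound of $O(\sqrt n)$ per length for $m>m_0$. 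In the binary case the most plausible route is additive-combinatorial, exploiting the rigidity of sequences whose increments lie in $\{0,1\}$ to bound the number of admissible APs.

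\textbf{Main obstacle.} The difficulty is interpolating between two extreme regimes. For ``balanced'' binary words (random, Sturmian-like) one has $|E_m|$ small but the Parikh vectors spread over $\Theta(\sqrt m)$ values, giving roughly $\sqrt m$ inequivalent classes per length and $O(n^{3/2})$ in total; for ``unbalanced'' words (like $\sa{a}^{n/2}\sa{b}^{n/2}$) $|E_m|$ is large but only $O(1)$ Parikh vectors arise among $\{P_m(i):i\in E_m\}$. The conjectured bound must absorb both extremes simultaneously, which suggests some amortization or entropy argument: for each inequivalent abelian square of length $2m$, charge a token to a structure (a triple of positions, or a pair (position, Parikh vector)) that receives at most $O(\sqrt n)$ tokens in total across all $(m,P)$. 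Finding such a charging scheme, or equivalently converting the intuitive tradeoff between $|E_m|$ and the spread of $\{P_m(i):i\in E_m\}$ into a quantitative worst-case statement, appears to require genuinely new tools beyond the combinatorial arguments used in the preceding sections — which is presumably why the conjecture has remained open — and is further complicated for $\sigma\ge 3$ since Parikh vectors then live in a higher-dimensional space.
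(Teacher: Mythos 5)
The statement you were asked to prove is not a theorem of the paper: it appears there as an open conjecture, attributed to Kociumaka et al.~\cite{Ry14,warsaw2} and supported only by computations, so the paper contains no proof against which your attempt can be compared. Your proposal, correspondingly, is not a proof either, and you say so explicitly; what you actually deliver is a correct reformulation together with an accurate diagnosis of the obstruction. The reformulation is sound: an abelian square $w[i..i+2m-1]$ is determined up to Parikh equivalence by the Parikh vector of its half, so the quantity to bound is $\sum_m |\{P_m(i) : i\in E_m\}|$, and your warning that bounding each term by the abelian complexity $\rho_w^{\mathrm{ab}}(m)$ alone cannot work (your example $a^{n/2}b^{n/2}$, where that bound gives $\Theta(n^2)$ against a true count of $\Theta(n)$) is exactly the right cautionary remark. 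The split at $m_0=\lfloor\sqrt n\rfloor$ correctly disposes of the short lengths. One inaccuracy worth flagging: lumping Sturmian words with random words as having Parikh vectors ``spread over $\Theta(\sqrt m)$ values'' is wrong for Sturmian words, whose abelian complexity is bounded --- indeed the paper itself notes that Sturmian words have only a linear number of inequivalent abelian squares; the $\Theta(\sqrt m)$ spread is the random-word phenomenon.

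Since neither you nor the paper supplies an argument, there is no step-by-step gap to locate; the substantive verdict is simply that the conjecture remains unproved by your proposal, exactly as it remains unproved (and is explicitly left open) in the paper. Your identification of the missing ingredient --- a charging or amortization scheme that simultaneously handles the regime where $E_m$ is large but the set $\{P_m(i): i\in E_m\}$ is small, and the regime where $E_m$ is small but the Parikh vectors are spread out --- is a fair statement of where the difficulty lies, and is consistent with the state of the art cited in the paper, namely the lower bound $\Omega(n\sqrt n)$ from \cite{warsaw2} plus computational evidence for the matching upper bound.
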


\section{Acknowledgments}

The authors warmly thank Julien Cassaigne for useful discussions about the
number of abelian-square factors in the Thue-Morse word, and the anonymous referees for their valuable comments, which allowed us to improve the presentation of the paper.

\newcommand{\noopsort}[1]{} \newcommand{\singleletter}[1]{#1}


\begin{thebibliography}{10}

\bibitem{Allouche&Rampersad&Shallit:2009}
J.-P. Allouche, N.~Rampersad, and J.~Shallit.
\newblock Periodicity, repetitions, and orbits of an automatic sequence.
\newblock {\em Theoret. Comput. Sci.}, 410:2795--2803, 2009.

\bibitem{Allouche&Shallit:1992}
J.-P. Allouche and J.~Shallit.
\newblock The ring of $k$-regular sequences.
\newblock {\em Theoret. Comput. Sci.}, 98:163--197, 1992.

\bibitem{Allouche&Shallit:2003}
J.-P. Allouche and J.~Shallit.
\newblock {\em Automatic Sequences: Theory, Applications, Generalizations}.
\newblock Cambridge University Press, Cambridge, 2003.

\bibitem{Berstel&Reutenauer:2011}
J.~Berstel and C.~Reutenauer.
\newblock {\em Noncommutative Rational Series with Applications}, volume 137 of
  {\em Encylopedia of Mathematics and Its Applications}.
\newblock Cambridge University Press, 2011.

\bibitem{BSMS09}
F.~Blanchet-Sadri, R.~{Merca\c{s}}, and G.~Scott.
\newblock Counting distinct squares in partial words.
\newblock {\em Ars Combin.}, 19:465--477, 2009.

\bibitem{Br89}
S.~Brlek.
\newblock Enumeration of factors in the {T}hue-{M}orse word.
\newblock {\em Discr. Appl. Math.}, 24(1-3):83--96, 1989.

\bibitem{Ca99}
J.~Cassaigne.
\newblock Limit values of the recurrence quotient of {Sturmian} sequences.
\newblock {\em Theoret. Comput. Sci.}, 218(1):3--12, 1999.

\bibitem{CaFiScZa15}
J.~Cassaigne, G.~Fici, M.~Sciortino, and L.~Zamboni.
\newblock Cyclic complexity of words.
\newblock {\em J. Combin. Theory Ser.~A}, 145:36--56, 2017.

\bibitem{Charlier&Rampersad&Shallit:2012}
E.~Charlier, N.~Rampersad, and J.~Shallit.
\newblock Enumeration and decidable properties of automatic sequences.
\newblock {\em Int. J. Found. Comput. Sci.}, 23:1035--1066, 2012.

\bibitem{Ch14}
M.~Christodoulakis, M.~Christou, M.~Crochemore, and C.~S. Iliopoulos.
\newblock On the average number of regularities in a word.
\newblock {\em Theoret. Comput. Sci.}, 525:3--9, 2014.

\bibitem{DZ00}
D.~Damanik and D.~Zare.
\newblock Palindrome complexity bounds for primitive substitution sequences.
\newblock {\em Discr. Math.}, 222(1--3):259--267, 2000.

\bibitem{DeFrTh15}
A.~Deza, F.~Franek, and A.~Thierry.
\newblock How many double squares can a string contain?
\newblock {\em Discrete Appl. Math.}, 180:52--69, 2015.

\bibitem{Du&Mousavi&Schaeffer&Shallit:2014}
C.~F. Du, H.~Mousavi, L.~Schaeffer, and J.~Shallit.
\newblock Decision algorithms for {Fibonacci}-automatic words with applications
  to pattern avoidance.
\newblock Preprint, available at \url{http://arxiv.org/abs/1406.0670}, 2014.

\bibitem{Du&Mousavi&Schaeffer&Shallit:2016}
C.~F. Du, H.~Mousavi, L.~Schaeffer, and J.~Shallit.
\newblock Decision algorithms for {Fibonacci}-automatic words, {III}:
  Enumeration and abelian properties.
\newblock To appear, {\it Int. J. Found. Comput. Sci.}, 2016.

\bibitem{Dur99}
F.~Durand, B.~Host, and C.~Skau.
\newblock Substitutional dynamical systems, {B}ratteli diagrams and dimension
  groups.
\newblock {\em Ergodic Theory Dynam. Systems}, 19(4):953--993, 1999.

\bibitem{Erdos61}
P.~Erd\H{o}s.
\newblock Some unsolved problems.
\newblock {\em Magyar Tud. Akad. Mat. Kutato. Int. Kozl.}, 6:221--254, 1961.

\bibitem{tcs16}
G.~Fici, A.~Langiu, T.~Lecroq, A.~Lefebvre, F.~Mignosi, J.~Peltom\"{a}ki, and
  E.~Prieur-Gaston.
\newblock Abelian powers and repetitions in {Sturmian} words.
\newblock {\em Theoret. Comput. Sci.}, 635:16--34, 2016.

\bibitem{dlt}
G.~Fici, A.~Langiu, T.~Lecroq, A.~Lefebvre, F.~Mignosi, and E.~Prieur-Gaston.
\newblock Abelian repetitions in {Sturmian} words.
\newblock In {\em DLT~2013}, volume 7907 of {\em Lecture Notes in Computer
  Science}, pages 227--238. Springer, 2013.

\bibitem{FiMi15}
G.~Fici and F.~Mignosi.
\newblock Words with the maximum number of abelian squares.
\newblock In {\em WORDS~2015}, volume 9304 of {\em Lecture Notes in Computer
  Science}, pages 122--134. Springer-Verlag, 2015.

\bibitem{FS98}
A.~S. Fraenkel and J.~Simpson.
\newblock How many squares can a string contain?
\newblock {\em J. Combin. Theory Ser. A}, 82(1):112--120, 1998.

\bibitem{FSP97}
A.~S. Fraenkel, J.~Simpson, and M.~Paterson.
\newblock On weak circular squares in binary words.
\newblock In {\em CPM~1997}, volume 1264 of {\em Lecture Notes in Computer
  Science}, pages 76--82. Springer, 1997.

\bibitem{GMMNT13}
P.~Gawrychowski, F.~Manea, R.~{Merca\c{s}}, D.~Nowotka, and C.~Tiseanu.
\newblock Finding pseudo-repetitions.
\newblock In {\em STACS 2013}, volume~20 of {\em Leibniz International
  Proceedings in Informatics (LIPIcs)}, pages 257--268. LIPICS Schloss
  Dagstuhl, 2013.

\bibitem{Goc&Mousavi&Shallit:2013}
D.~Goc, H.~Mousavi, and J.~Shallit.
\newblock On the number of unbordered factors.
\newblock In {\em LATA~2013}, volume 7810 of {\em Lecture Notes in Computer
  Science}, pages 299--310. Springer-Verlag, 2013.

\bibitem{Goc&Henshall&Shallit:2013}
D.~{Go\v{c}}, D.~Henshall, and J.~Shallit.
\newblock Automatic theorem-proving in combinatorics on words.
\newblock {\em Int. J. Found. Comput. Sci.}, 24:781--798, 2013.

\bibitem{I07}
L.~Ilie.
\newblock A note on the number of squares in a word.
\newblock {\em Theoret. Comput. Sci.}, 380(3):373--376, 2007.

\bibitem{Ker92}
V.~Ker\"{a}nen.
\newblock Abelian squares are avoidable on 4 letters.
\newblock In {\em ICALP~1992}, volume 623 of {\em Lecture Notes in Comput.
  Sci.}, pages 41--52. Springer-Verlag, 1992.

\bibitem{Ry14}
T.~Kociumaka, J.~Radoszewski, W.~Rytter, and T.~Walen.
\newblock Maximum number of distinct and nonequivalent nonstandard squares in a
  word.
\newblock In {\em DLT~2014}, volume 8633 of {\em Lecture Notes in Computer
  Science}, pages 215--226. Springer, 2014.

\bibitem{warsaw2}
T.~Kociumaka, J.~Radoszewski, W.~Rytter, and T.~Wale{\'n}.
\newblock Maximum number of distinct and nonequivalent nonstandard squares in a
  word (extended version of \cite{Ry14}).
\newblock {\em Theoret. Comput. Sci.}, 648:84--95, 2016.

\bibitem{KuNi}
L.~Kuipers and H.~Niederreiter.
\newblock {\em Uniform Distribution of Sequences}.
\newblock John Wiley \& Sons, New York, NY, 1974.

\bibitem{LS97}
J.~C. Lagarias and J.~Shallit.
\newblock Linear fractional transformations of continued fractions with bounded
  partial quotients.
\newblock {\em J. {Th\'eorie} Nombres Bordeaux}, 9:267--279, 1997.
\newblock Corrigendum, 15:741--743, 2003.

\bibitem{LothAlg}
M.~Lothaire.
\newblock {\em Algebraic Combinatorics on {W}ords}.
\newblock Cambridge University Press, Cambridge, U.K., 2002.

\bibitem{Mi89}
F.~Mignosi.
\newblock Infinite words with linear subword complexity.
\newblock {\em Theoret. Comput. Sci.}, 65(2):221--242, 1989.

\bibitem{MignosiPirillo}
F.~Mignosi and G.~Pirillo.
\newblock Repetitions in the {F}ibonacci infinite word.
\newblock {\em RAIRO Theor. Inform. Appl.}, 26:199--204, 1992.

\bibitem{Mousavi:2016}
H.~Mousavi.
\newblock Automatic theorem proving in {Walnut}.
\newblock Preprint, available at \url{http://arxiv.org/abs/1603.06017}, 2016.

\bibitem{Rigo13}
M.~Rigo, P.~Salimov, and E.~Vandomme.
\newblock Some properties of abelian return words.
\newblock {\em J. Integer Seq.}, 16:13.2.5, 2013.

\bibitem{Shallit:2013}
J.~Shallit.
\newblock Decidability and enumeration for automatic sequences: a survey.
\newblock In {\em CSR~2013}, volume 7913 of {\em Lecture Notes in Computer
  Science}, pages 49--63. Springer-Verlag, 2013.

\end{thebibliography}
\end{document}